\newtheorem{rem}{Remark}
\newtheorem{theorem}{Theorem}
\newtheorem{lemma}[theorem]{Lemma}
\newtheorem{corollary}[theorem]{Corollary}
\newtheorem{example}{Example}
\newcommand{\tr}{{\mathrm{Tr}}}
\newcommand{\gf}{{\mathrm{GF}}}
\newcommand{\cP}{{\mathcal{P}}}
\newcommand{\cB}{{\mathcal{B}}}
\newcommand{\C}{{\mathcal{C}}}
\newcommand{\bD}{{\mathbb{D}}}
\begin{document}

\begin{frontmatter}

%% Title, authors and addresses

%% use the tnoteref command within \title for footnotes;
%% use the tnotetext command for the associated footnote;
%% use the fnref command within \author or \address for footnotes;
%% use the fntext command for the associated footnote;
%% use the corref command within \author for corresponding author footnotes;
%% use the cortext command for the associated footnote;
%% use the ead command for the email address,
%% and the form \ead[url] for the home page:
%%
%% \title{Title\tnoteref{label1}}
%% \tnotetext[label1]{}
%% \author{Name\corref{cor1}\fnref{label2}}
%% \ead{email address}
%% \ead[url]{home page}
%% \fntext[label2]{}
%% \cortext[cor1]{}
%% \address{Address\fnref{label3}}
%% \fntext[label3]{}

\title{ Some $3$-designs and shortened codes from binary cyclic codes with three zeros
\tnotetext[fn1]{C. Xiang's research was supported by
the National Natural Science Foundation of China under grant numbers 12171162 and 11971175, and the Basic Research Project of Science and Technology Plan of Guangzhou city of China  under grant number 202102020888; C. Tang's research was supported by the National Natural Science Foundation of China under grant number 12231015, the Sichuan Provincial Youth Science and Technology Fund under grant number 2022JDJQ0041 and  the Innovation Team Funds of China West Normal University under grant number KCXTD2022-5.}}

%% use optional labels to link authors explicitly to addresses:
%% \author[label1,label2]{<author name>}
%% \address[label1]{<address>}
%% \address[label2]{<address>}
%\author{Cunsheng Ding}
%\ead{cding@ust.hk}

\author[cx]{Can Xiang}
\address[cx]{College of Mathematics and Informatics, South China Agricultural University, Guangzhou, Guangdong 510642, China}
\ead{cxiangcxiang@hotmail.com}
\author[cmt]{Chunming Tang}
\address[cmt]{School of Information Science and Technology, Southwest Jiaotong University, Chengdu 610031, China}
\ead{tangchunmingmath@163.com}

%\cortext[zz]{ is the Corresponding author}

%\address[cding]{Department of Computer Science and Engineering, The Hong Kong University of Science and Technology, Clear Water Bay, Kowloon, Hong Kong, China}

\begin{abstract}
Linear codes and $t$-designs are interactive with each other. It is well known that some $t$-designs have been
constructed by using certain linear codes in recent years. However, only
a small number of  infinite families of the extended codes of linear codes holding an infinite family of $t$-designs with $t\geq 3$ are reported in
the literature. In this paper, we study the extended codes of the augmented codes of a class of binary cyclic codes with three zeros and their dual codes, and show that those codes hold $3$-designs. Furthermore, we obtain some shortened codes from the studied cyclic codes and explicitly determine their parameters. Some of those shortened codes are optimal or almost optimal.

\end{abstract}

\begin{keyword}
Linear code, cyclic code, shortened code, $t$-designs
%% PACS codes here, in the form: \PACS code \sep code

%% MSC codes here, in the form: \MSC code \sep code
%% or \MSC[2008] code \sep code (2000 is the default)
\MSC 51E21 \sep 94B05 \sep 51E22

\end{keyword}

\end{frontmatter}

\section{Introduction}
Let $p$ be a prime and $q = p^m$ for some positive integer $m$. Let $\gf(q)$ be the finite field
of cardinality $q$. A $[v,\, k,\,d]$ linear code $\C$ over $\gf(q)$ is a $k$-dimensional subspace of $\gf(q)^v$ with minimum (Hamming) distance $d$.
A $[v,\, k,\,d]$ linear code $\C$ is said to be {\em cyclic} if
$(c_0,c_1, \cdots, c_{v-1}) \in \C$ implies $(c_{v-1}, c_0, c_1, \cdots, c_{v-2}) \in \C$.
It is known that a cyclic code is a special linear code. Although the error correcting capability of cyclic codes may not be as good as some other linear codes in general, cyclic codes have wide applications in communication and storage systems as they have efficient encoding and decoding algorithms \cite{Chien5,Forney12,Prange28}. Thus, cyclic codes have attracted much attention in coding theory and a lot of progress has been made (see, for example, \cite{dinghell2013,Ding2018,sihem1,YZD2018,zhou20131}).

It is known that linear codes and $t$-designs are closely related. A $t$-design can be induced to a linear code (see, for example, \cite{Dingtv2020,Dingt20201}) and a linear code $\C$ may induce a $t$-design under certain conditions. As far as we know, a lot of $2$-designs and $3$-designs have been constructed from some special linear codes (see, for example, \cite{Ding18dcc,Ding18jcd,ding2018,Tangdcc2019}). Recently, an infinity family of linear codes holding $4$-designs was constructed in \cite{Tangding2020}. It remains open if there is an infinite family of linear codes holding $5$-designs. In fact, only a few infinite families of the extended codes of linear codes holding an infinite family of $3$-designs are reported in the literature. Motivated by this fact, we will consider a class of binary cyclic codes
\begin{eqnarray}\label{ce}
\C^{(e)}=\left\{\left(\tr(ax^{2^{2e}+1}+bx^{2^e+1}+cx)_{x \in \gf(q)^*}\right):a,b,c\in \gf(q)\right\}.
\end{eqnarray}
with zero set $\{1,2^e+1,2^{2e}+1\}$, where $q=2^m$, $\gf(q)^*=\gf(q)\setminus \{0\}$, $e$ is a positive integer with $1\leq e\leq m-1$ and $e\notin \{\frac{m}{3},\frac{2m}{3}\}$, $m/\gcd(m,e)$ is odd and $\tr$ is the trace function from $\gf(q)$ to $\gf(2)$. The first objective of this paper is to show that the extended code  of the augmented code  of $\C^{(e)}$ and its dual code  hold $3$-designs and determine their parameters.

It is worth noting that the shortening and puncturing technologies are two important approaches to constructing new linear codes. In 2019, Tang et al. obtained some ternary linear codes with few weights by shortening and puncturing a class of ternary codes in \cite{Tangdcc2019}. Afterward, they also presented a general theory for punctured and shortened codes
of linear codes supporting t-designs and generalized Assmus-Mattson theorem in \cite{Tangit2019}. Very recently, Liu, Ding and Tang \cite{LDT2020} proved
some general theories for shortened linear codes and studied some shortened codes of known special codes such as Hamming codes, Simplex codes, some Reed-Muller codes and ovoid codes. Meanwhile, Xiang, Tang and Ding\cite{XTD2020} obtained some shortened codes of linear codes from almost perfect nonlinear and  perfect nonlinear functions and determined their parameters.
However, till now little research on the shortening technique has been done
and there are only a handful references on shortened linear codes, and it is in general hard to determine their weight distributions. Based on this fact,
the second objective of this paper is to study some shortened codes of $\C^{(e)}$ and determine their parameters. Some of these shortened codes presented in this paper are optimal or almost optimal.

The rest of this paper is arranged as follows. Section \ref{sec-pre} states some notation and results
about linear codes, combinatorial $t$-designs and exponential sums. Some infinite families of $3$-designs are presented in Section \ref{sec-des3}.
Section \ref{sec-main} gives some shortened codes and determines their parameters. The conclusion of this paper is given in Section  \ref{sec-summary}.
%Section  \ref{sec-summary} concludes this paper and makes concluding remarks.

\section{Preliminaries}\label{sec-pre}

%In this section, we state some notation and recall some basic facts which will be used in the rest of this paper.

In this section, some notation and basic facts are described and will be needed later.

\subsection{Some results of linear codes}

Let $\C$ be a $[v,k,d]$ linear code over $\gf(q)$. We call $\C$ \emph{distance-optimal} if no $[v,k,d+1]$ code exists and \emph{dimension-optimal} if no $[v,k+1,d]$ code exists.  $\C$ is said to be \emph{length-optimal} if there is no $[v',k,d]$ code exists with $v' < v$. A code is said to be \emph{optimal} if it is distance-optimal, or dimension-optimal, or length-optimal, or meets a
bound for linear codes. A $[v,k,d]$ code is said to be \emph{almost optimal} if a $[v,k+1,d]$, or $[v,k,d+1]$, or $[v-1,k,d]$ code is optimal. The augmented code of $\C$ is denoted by $\widehat{\C}$ with generator matrix
$$\begin{bmatrix} ~G~ \\ ~\textbf{1}~  \end{bmatrix} ,$$
where $\textbf{1}=(1,1,\cdots,1)$ and $G$ is the generator matrix of $\C$. For any codeword $c=(c_0,c_1,\cdots,c_{v-1})\in \C$, we extend it into the vector
$$
\bar{c}=(c_0,c_1,\cdots,c_{v-1}, c_v),
$$
where $c_v=-(c_0+c_1+\ldots + c_{v-1})$. The extended code $\overline{\C}$ of $\C$
is then defined by
$$
\overline{\C}=\{\bar{c}: c \in \C \}.
$$
 If $H$ is the parity check matrix of $\C$, then the parity check matrix of $\overline{\C}$ is
$$\begin{bmatrix}\textbf{1} &1\\H& \textbf{0} \end{bmatrix} ,$$
where $\textbf{1}=(1,1,\cdots,1)$ and $\textbf{0}=(0,0,\cdots,0)^\top$.

Let $A_i(\C)$ denote the number of codewords with Hamming weight $i$ in
$\C$. The {\em weight enumerator} of $\C$ is defined by
$
1+\sum_{i=1}^{v} A_i(\C) z^i.
$
The weight enumerator of a linear code contains significant
information including its error correcting capability and the
error probability of error detection. Thus much work focuses on the determination of
the weight distributions of linear codes (see, for example, \cite{ding2018,Ding16, DingDing2, sihem2020,sihem2017,tang2016,WZ2020,zhou20131} and the references therein). A code $\C$ is said to be a $t'$-weight code if the number of nonzero
$A_i(\C)$ in the sequence $(A_1(\C), A_2(\C), \cdots, A_v(\C))$ is equal to $t'$.
Denote by $\C^\bot$  and $(A_0(\C^{\perp}), A_1(\C^{\perp}), \dots, A_\nu(\C^{\perp}))$ the dual code of a linear code $\C$ and its weight distribution, respectively.
The \emph{Pless power moments} \cite{HP10}, i.e.,
\begin{align}\label{eq:PPM}
 \sum_{i=0}^\nu  i^t A_i(\C)= \sum_{i=0}^t (-1)^i A_i(\C^{\perp})   \left [  \sum_{j=i}^t   j ! S(t,j) q^{k-j} (q-1)^{j-i} \binom{\nu-i}{\nu -j}  \right ],
 \end{align}
play an important role in calculating
the weight distributions of linear codes, where $A_0(\C)=1$, $0\le t \le \nu$ and $S(t,j)=\frac{1}{j!} \sum_{i=0}^j  (-1)^{j-i} \binom{j}{i} i^t$.

Let $T$ be a set of $t$ coordinate positions in $\C$. We puncture $\mathcal  C$  on
$T$ and obtain a linear code which is called the \emph{punctured code}  of $\mathcal C$ on $T$ and denoted by $\mathcal  C^T$.
We use $\mathcal C(T)$ to denote the set of codewords that are
$\mathbf{0}$ on $T$.
We now puncture $\mathcal C(T)$ on $T$, and obtain a linear code $\mathcal C_{T}$, which is called the \emph{shortened code} of $\mathcal C$ on $T$.
The following property plays an important role in determining the parameters of the punctured and shortened codes of $\mathcal{C}$  in \cite[Theorem 1.5.7]{HP10}.
\begin{lemma}\cite{HP10}\label{lem:C-S-P}
Let $\C$ be a $[\nu,k,d]$ linear code over $\gf(q)$ and  $d^{\perp}$  the minimum distance of $\mathcal  C^{\perp}$.
Let $T$ be any set of $t$ coordinate positions. Then
\begin{itemize}
  \item $\left ( \mathcal C_{T} \right )^{\perp} = \left ( \mathcal C^{\perp}  \right)^T$ and $\left ( \mathcal C^{T} \right )^{\perp} = \left ( \mathcal C^{\perp}  \right)_T$.
  \item If $t<\min \{d, d^{\perp} \}$, then the codes $\mathcal C_{T}$ and $\mathcal C^T$ have  dimension  $k-t$  and $k$, respectively.
\end{itemize}
\end{lemma}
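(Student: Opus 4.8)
The plan is to treat the two duality identities first and then read off the dimension statements from them. By permuting coordinates we may assume $T$ is a fixed block of positions, writing $T^c$ for the complementary set and $\bc|_S$ for the restriction of a vector $\bc$ to a coordinate set $S$; none of the claims is affected by such a permutation. The central identity I would establish directly is $(\C^T)^\perp = (\C^\perp)_T$, proved by showing both inclusions. For $(\C^\perp)_T \subseteq (\C^T)^\perp$: a vector in $(\C^\perp)_T$ has the form $\bw|_{T^c}$ with $\bw \in \C^\perp$ and $\bw|_T = \bzero$; for every $\bc \in \C$ one has $0 = \bw \cdot \bc = \bw|_{T^c} \cdot \bc|_{T^c}$ because $\bw$ vanishes on $T$, and $\bc|_{T^c}$ ranges over all of $\C^T$, so $\bw|_{T^c}$ is orthogonal to $\C^T$. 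For the reverse inclusion, given $\bar\bw \in (\C^T)^\perp$ I would extend it by zeros on $T$ to a vector $\bw \in \gf(q)^\nu$; then $\bw \cdot \bc = \bar\bw \cdot \bc|_{T^c} = 0$ for every $\bc \in \C$, so $\bw \in \C^\perp$ and $\bar\bw = \bw|_{T^c} \in (\C^\perp)_T$. The extend-by-zeros step is the only mildly nonobvious move here.

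Next I would obtain the companion identity $(\C_T)^\perp = (\C^\perp)^T$ essentially for free. Applying the identity just proved to $\C^\perp$ in place of $\C$ gives $((\C^\perp)^T)^\perp = ((\C^\perp)^\perp)_T = \C_T$, and dualizing both sides (using $(\C^\perp)^\perp = \C$) yields $(\C^\perp)^T = (\C_T)^\perp$, which is the claimed relation.

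For the dimension claims I would first handle puncturing. The puncturing map $\C \to \C^T$, $\bc \mapsto \bc|_{T^c}$, is a linear surjection, and its kernel consists of the codewords whose support lies inside $T$; any nonzero such codeword would have weight at most $t < d$, which is impossible, so the kernel is trivial and $\dim \C^T = k$. To get $\dim \C_T = k-t$ I would not argue directly but instead invoke the duality identity: since $\C^\perp$ is a $[\nu, \nu-k, d^\perp]$ code and $t < d^\perp$, the puncturing result applied to $\C^\perp$ gives $\dim (\C^\perp)^T = \nu - k$. Because $(\C_T)^\perp = (\C^\perp)^T$ and $\C_T$ lives in $\gf(q)^{\nu - t}$, this forces $\dim \C_T = (\nu - t) - (\nu - k) = k - t$.

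The argument is elementary linear algebra, so there is no serious obstacle; the points requiring care are the extend-by-zeros construction that supplies the harder inclusion in the duality identity, and the bookkeeping of which code sits in which ambient space. The one genuinely structural idea is to feed the hypothesis $t < d^\perp$ into the dual code, so that the shortened-dimension statement becomes a corollary of the punctured-dimension statement rather than demanding a separate support or rank argument.
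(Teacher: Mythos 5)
Your proposal is correct. The paper does not actually prove this lemma---it quotes it as Theorem 1.5.7 of Huffman and Pless \cite{HP10}---and your argument is essentially the standard one found there: the easy inclusion $(\C^{\perp})_T \subseteq (\C^T)^{\perp}$, the extend-by-zeros argument for the reverse inclusion, dualization (via $(\C^{\perp})^{\perp} = \C$) to obtain $(\C_T)^{\perp} = (\C^{\perp})^T$, injectivity of the puncturing map when $t < d$, and a dimension count applied to the dual code to get $\dim \C_T = k-t$.
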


\subsection{Combinatorial t-designs and some related results}

Let $k$, $t$ and $v$ be positive integers with $1 \leq t \leq k \leq  v$. Let $\cP$ be a set with $v$ elements and $\cB$ be a set of some $k$-subsets of
$\cP$. $\cB$  is called the point set and  $\cP$ is called the block set in general. The incidence structure
$\bD = (\cP, \cB)$ is called a $t$-$(v, k, \lambda)$ {\em design\index{design}} (or {\em $t$-design\index{$t$-design}}) if each $t$-subset of $\cP$ is contained in exactly $\lambda$ blocks of
$\cB$.
Let $\binom{\cP}{k}$ denote the set consisting of all $k$-subsets of the point set $\cP$. Then the incidence structure $(\cP, \binom{\cP}{k})$ is a $k$-$(v, k, 1)$ design and is called a \emph{complete design}. The special incidence structure $(\cP, \emptyset)$ is called a $t$-$(v, k, 0)$ trivial design
for all $t$  and $k$ . A combinatorial $t$-design is said to be {\em simple\index{simple}} if its block set $\cB$ does not have
a repeated block. When $t \geq 2$ and $\lambda=1$, a $t$-$(v,k,\lambda)$ design is called  a
{\em Steiner system\index{Steiner system}}
and denoted by $S(t,k, v)$. The parameters of a combinatorial $t$-$(v, k, \lambda)$ design must satisfy the following equation:
\begin{eqnarray}\label{eq:bb}
b  =\lambda \frac{\binom{v}{t}}{\binom{k}{t}}
\end{eqnarray}
where $b$ is the cardinality of $\cB$.

%The interplay between linear codes and $t$-designs has attracted a lot of attention for both directions. It is well known that the supports of all codewords with a fixed weight in a code may hold a t-design.
%The interplay between linear codes and combinatorial $t$-designs has attracted a lot of attention.

It is well known that $t$-designs and linear codes  are interactive with each other.
A $t$-design $\mathbb  D=(\mathcal P, \mathcal B)$ can be used to construct a linear code over GF($q$) for any $q$ (see, for example, \cite{Dingt20201,ton1,ton2}).
Meanwhile, a linear code $\C$ may produce a $t$-design which is formed by supports of codewords of a fixed Hamming weight in $\C$. Let $\nu$ be the length of $\mathcal C$ and the set of the coordinates of codewords in $\mathcal C$ is denoted by $\mathcal P(\mathcal C)=\{0,1, 2, \dots, \nu-1\}$. The \emph{support} of  $\mathbf c$
is defined by
\begin{align*}
\mathrm{Supp}(\mathbf c) = \{i: c_i \neq 0, i \in \mathcal P(\mathcal C)\}
\end{align*}
for any codeword $\mathbf c =(c_0, c_1, \dots, c_{\nu-1})$ in $\mathcal C$.
Let $\mathcal B_{w}(\mathcal C)$ denote the set $\{\{   \mathrm{Supp}(\mathbf c): wt(\mathbf{c})=w
~\text{and}~\mathbf{c}\in \mathcal{C}\}\}$, where $\{\{\}\}$ is the multiset notation. For some special code $\mathcal C$,
the incidence structure $\left (\mathcal P(\mathcal C),  \mathcal B_{w}(\mathcal C) \right)$
could be a $t$-$(v,w,\lambda)$ design for some positive integers $t$ and $\lambda$.
We say that the code $\mathcal C$ \emph{supports $t$-designs} if $\left (\mathcal P(\mathcal C),  \mathcal B_{w}(\mathcal C) \right)$ is a $t$-design for all $w$ with $0\le w \le \nu$. By definition, such design
$\left (\mathcal P(\mathcal C),  \mathcal B_{w}(\mathcal C) \right)$ could have some repeated
blocks, or could be simple, or may be trivial.
In this way, many $t$-designs have been constructed from linear codes (see, for example, \cite{Ding18dcc,Ding18jcd,ding2018,Tangding2020,du1,Tangdcc2019}). A major way to construct combinatorial $t$-designs with linear codes over finite fields is the use of linear codes with $t$-transitive or $t$-homogeneous automorphism groups (see \cite[Theorem 4.18]{ding2018}) and some combinatorial $t$-designs (see, for example, \cite{LiuDing2017,Liudingtang2021}) were obtained by this way.
Another major way to construct $t$-designs with linear codes is the use of the
Assmus-Mattson Theorem (AM Theorem for short) in \cite[Theorem 4.14]{ding2018} and the generalized version of the
AM Theorem in \cite{Tangit2019}, which was recently employed to construct a number of $t$-designs (see, for example, \cite{ding2018,du1}).
The following theorem is a generalized version of the
AM Theorem, which was developed in \cite{Tangit2019} and will be needed in this paper.

\begin{theorem}\cite{Tangit2019}\label{thm-designGAMtheorem}
Let $\mathcal C$ be a linear code over $\mathrm{GF}(q)$ with minimum distance $d$ and length $\nu$.
Let $\mathcal C^{\perp}$ denote the dual of $\mathcal C$ with minimum distance $d^{\perp}$.
Let $s$ and $t$ be positive integers with $t< \min \{d, d^{\perp}\}$. Let $S$ be a $s$-subset
of the set $\{d, d+1, d+2, \ldots, \nu-t  \}$.
Suppose that
$\left ( \mathcal P(\mathcal C), \mathcal B_{\ell}(\mathcal C) \right )$ and $\left ( \mathcal P(\mathcal C^{\perp}), \mathcal B_{\ell^{\perp}}(\mathcal C^{\perp}) \right )$
are $t$-designs  for
$\ell    \in \{d, d+1, d+2, \ldots, \nu-t  \} \setminus S $ and $0\le \ell^{\perp} \le s+t-1$, respectively. Then
the incidence structures
 $\left ( \mathcal P(\mathcal C) , \mathcal B_k(\mathcal C) \right )$ and
  $\left ( \mathcal P(\mathcal C^{\perp}), \mathcal B_{k}(\mathcal C^{\perp}) \right )$ are
  $t$-designs for any
$t\le k  \le \nu$, and particularly,
\begin{itemize}
\item the incidence structure $\left ( \mathcal P(\mathcal C) , \mathcal B_k(\mathcal C) \right )$ is a simple $t$-design
      for all integers $k$ with $d \leq k \leq w$, where $w$ is defined to be the largest  integer
      such that $w \leq \nu$ and
      $$
      w-\left\lfloor \frac{w+q-2}{q-1} \right\rfloor <d;
      $$
\item  and the incidence structure $\left ( \mathcal P(\mathcal C^{\perp}), \mathcal B_{k}(\mathcal C^{\perp}) \right )$ is
       a simple $t$-design
      for all integers $k$ with $d \leq k \leq w^\perp$, where $w^\perp$ is defined to be the largest integer
      such that $w^\perp \leq \nu$ and
      $$
      w^\perp-\left\lfloor \frac{w^\perp+q-2}{q-1} \right\rfloor <d^\perp.
      $$
\end{itemize}
\end{theorem}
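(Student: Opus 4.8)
The plan is to reduce the design conclusion to a statement about the invariance of shortened-code weight distributions over the choice of $t$-subset, and then to extract that invariance from the Pless power moments in \eqref{eq:PPM} together with Lemma \ref{lem:C-S-P}. For a code $\mathcal D$ of length $n$, a $t$-subset $T$ of coordinates, and a weight $w$, the number of weight-$w$ codewords of $\mathcal D$ whose support is disjoint from $T$ equals $A_w(\mathcal D_T)$, since a codeword vanishing on $T$ keeps its weight after puncturing on $T$. Complementing supports, the weight-$w$ supports form a $t$-design precisely when this count is independent of $T$; thus it suffices to show that $A_w(\C_T)$ and $A_{w^\perp}((\C^\perp)_T)$ are independent of $T$ for every weight and every $t$-subset $T$. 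Because $t<\min\{d,d^\perp\}$, Lemma \ref{lem:C-S-P} applies: $\C_T$ is a $[\nu-t,k-t]$ code with $(\C_T)^\perp=(\C^\perp)^T$, and a nonzero codeword of $\C_T$ lifts to a nonzero codeword of $\C$ of the same weight, so the nonzero weights of $\C_T$ all lie in $\{d,d+1,\dots,\nu-t\}$.

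Next I would pass from designs to punctured weight distributions. For a code $\mathcal D$ supporting $t$-designs on the relevant weights and a $t$-set $T$ with $t<d(\mathcal D)$, puncturing is injective and a codeword punctures to weight $\wt(c)-|\mathrm{Supp}(c)\cap T|$, so
\[
A_j(\mathcal D^T)=\sum_{i=0}^{t}\#\{c\in\mathcal D:\ \wt(c)=j+i,\ |\mathrm{Supp}(c)\cap T|=i\},
\]
and each summand is the number of blocks of the weight-$(j+i)$ design meeting the fixed $t$-set $T$ in exactly $i$ points, an intersection number determined solely by the design parameters and hence independent of $T$. Applying this to $\mathcal D=\C^\perp$ and invoking the hypothesis that $\C^\perp$ holds $t$-designs for all weights $\ell^\perp\le s+t-1$, I get that $A_j((\C^\perp)^T)$ is independent of $T$ for all $j\le s-1$, the largest weight of $\C^\perp$ entering being $(s-1)+t=s+t-1$.

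The core is then a linear-algebra argument. By hypothesis $A_w(\C_T)$ is already $T$-independent for every $w\in\{d,\dots,\nu-t\}\setminus S$, so the only possibly $T$-dependent weights of $\C_T$ are the $s$ numbers $\{A_w(\C_T):w\in S\}$. Writing the Pless power moments \eqref{eq:PPM} for $\C_T$ (with dual $(\C^\perp)^T$) at orders $\tau=0,1,\dots,s-1$, the right-hand side of the $\tau$-th identity involves only $A_0((\C^\perp)^T),\dots,A_\tau((\C^\perp)^T)$, all $T$-independent by the previous step. Moving the known terms aside yields
\[
\sum_{w\in S} w^\tau A_w(\C_T)=(\text{a quantity independent of }T),\qquad \tau=0,1,\dots,s-1,
\]
a square system whose coefficient matrix $(w^\tau)_{0\le\tau\le s-1,\,w\in S}$ is Vandermonde in the $s$ distinct weights of $S$, hence invertible. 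Thus each $A_w(\C_T)$ with $w\in S$ is independent of $T$, and $\C$ supports $t$-designs for all weights. Feeding this back into the punctured-weight formula with $\mathcal D=\C$ shows the weight distribution of $\C^T$ is $T$-independent; since $(\C^T)^\perp=(\C^\perp)_T$ by Lemma \ref{lem:C-S-P}, the MacWilliams identities force the weight distribution of $(\C^\perp)_T$ to be $T$-independent, so $\C^\perp$ supports $t$-designs for all weights. I expect the delicate point to be exactly this bookkeeping: confirming that the $s$ lowest power moments reach no further than weight $s+t-1$ in $\C^\perp$ (so their right-hand sides are genuinely $T$-independent) while still isolating precisely the $s$ unknowns indexed by $S$, so the Vandermonde system is square and nonsingular.

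Finally, for the simplicity claim I would argue that for $d\le k\le w$ no two weight-$k$ codewords with a common support are linearly independent: otherwise a suitable $\gf(q)$-combination annihilates at least $\lceil k/(q-1)\rceil=\lfloor (k+q-2)/(q-1)\rfloor$ coordinates of that support, producing a nonzero codeword of weight at most $k-\lceil k/(q-1)\rceil<d$, contradicting the defining inequality for $w$. Hence distinct weight-$k$ codewords that are not scalar multiples have distinct supports, so $\mathcal B_k(\C)$ carries no repeated block; the same argument with $d^\perp$ and $w^\perp$ handles $\C^\perp$.
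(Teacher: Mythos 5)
This theorem is not proved in the paper at all --- it is quoted verbatim from \cite{Tangit2019} and used as a black box --- so the only meaningful comparison is with the proof in that cited source. Your argument is correct and reconstructs essentially that proof: the equivalence between supporting $t$-designs and $T$-invariance of the shortened weight counts $A_w(\mathcal{C}_T)$, the $T$-invariance of the low-weight terms of $(\mathcal{C}^\perp)^T=(\mathcal{C}_T)^\perp$ deduced from the assumed designs of $\mathcal{C}^\perp$ at weights up to $s+t-1$ via design intersection numbers, the square Vandermonde system extracted from the first $s$ Pless power moments that pins down $A_w(\mathcal{C}_T)$ for the $s$ weights $w\in S$, the MacWilliams/duality step returning the conclusion to $\mathcal{C}^\perp$, and the pigeonhole argument for simplicity. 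Two small points worth tightening: the conclusion $k-\lceil k/(q-1)\rceil<d$ for all $k\le w$ tacitly needs the (easy) observation that $k\mapsto k-\lfloor (k+q-2)/(q-1)\rfloor$ is non-decreasing, since $w$ is only defined as the largest integer satisfying the inequality; and the simplicity claim should be read with supports of scalar multiples identified (the convention the paper itself adopts when it divides codeword counts by $q-1$ to get block counts), since otherwise the literal multiset $\mathcal{B}_k(\mathcal{C})$ repeats every block $q-1$ times whenever $q>2$.
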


\subsection{Exponential sums}

Let $q=2^m$ and $\tr$ be the trace function from $\gf(2^m)$ to
$\gf(2)$ in the rest of this paper.
%in the rest of this paper.
Let $(a,\ b,\ c )\in \gf(q)^3$ and  define the following exponential sums,
\begin{equation*}\label{defkm}
K_m(a) = \sum \limits_{x\in \gf(q)^*}(-1)^{\tr(ax+x^{-1})},
\end{equation*}
\begin{equation*}\label{defcm}
C_m(a,b) = \sum \limits_{x\in \gf(q)}(-1)^{\tr(ax^3+bx)},
\end{equation*}
\begin{equation*}\label{defgm}
G_m(a,b) = \sum \limits_{x\in \gf(q)^*}(-1)^{\tr(ax^3+bx^{-1})}.
\end{equation*}

Let $(a,b,c)\in \gf(q)^3$ and
$N_{(a,b,c)}$ denote the number of $\{x,y,z,u\}\subseteq \gf(q)$ satisfying the system of equations:
%$N(a,b,c)$ denote the number of solutions of different (sets of) four-tuples $(x,y,z,u)$, where $x,y,z,u\in \gf(q)$ are pairwise distinct elements, which are solutions of the following system of equations:
\begin{eqnarray*}
\left\{
\begin{array}{ll}
  x+y+z+u=a&\\[2mm]
 x^{3}+y^{3}+z^{3}+u^3=b&\\[2mm]
x^{5}+y^{5}+z^{5}+u^5=c
 \end{array}
 \right..
\end{eqnarray*}
Then we have the following results which was described in \cite{Johansen2009}.
\begin{lemma}\label{lem-solution}\cite{Johansen2009}
Let $m\geq 1$ be odd, $a\in \gf(q)^*$, $(b,c)\in \gf(q)^2$, $\varepsilon=\tr(\frac{b}{a^3})$ and
$$
\mu=\frac{c}{a^5}+\frac{b^2}{a^6}+\frac{b}{a^3}.
$$
Then
\begin{eqnarray}\label{eqn-solution}
N_{(a,b,c)}
&=&
\left\{
\begin{array}{ll}
0              &   \mbox{ if } \mu=1, \\ \nonumber
\frac{1}{24} \left(2^m-5+3G_m(\mu+1,\mu+1)\right)+\frac{1}{12}(-1)^{\varepsilon+1}    &   \mbox{ if } \mu \neq 1 .\\
~~~~~~~~\times \left(K_m(\mu+1)+C_m(1,(\mu+1)^{1/3})-3\right)    &    ~\\
\end{array} \right.
\end{eqnarray}
\end{lemma}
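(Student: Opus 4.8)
My plan is to strip the problem down to two additive invariants and then funnel the count into the three named exponential sums.

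\textbf{Step 1: Scaling.} Since $a\in\gf(q)^{*}$, applying $x\mapsto ax$ to all four coordinates is a bijection on the $4$-subsets of $\gf(q)$ and multiplies the power sums $p_{1},p_{3},p_{5}$ (where $p_{j}=x^{j}+y^{j}+z^{j}+u^{j}$) by $a,a^{3},a^{5}$. Hence $N_{(a,b,c)}$ equals the number of $4$-subsets $\{x,y,z,u\}$ of distinct elements with $p_{1}=1$, $p_{3}=\beta$, $p_{5}=\gamma$, where $\beta=b/a^{3}$ and $\gamma=c/a^{5}$; this is exactly why $b/a^{3}$ and $c/a^{5}$ enter $\mu$ and $\varepsilon=\tr(\beta)$. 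I note that in characteristic $2$ only the odd power sums carry information, since $p_{2k}=p_{k}^{2}$, which is why the three constraints involve precisely $p_{1},p_{3},p_{5}$.

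\textbf{Step 2: Reduction to invariants.} The simultaneous translation $x_{i}\mapsto x_{i}+s$ is again a bijection on $4$-subsets. Expanding $(x+s)^{3}$ and $(x+s)^{5}$ in characteristic $2$ shows it fixes $p_{1}=1$ and sends $\beta\mapsto\beta+s+s^{2}$, $\gamma\mapsto\gamma+s+s^{4}$; a direct check then gives that both $\mu=\gamma+\beta^{2}+\beta$ and $\varepsilon=\tr(\beta)$ are invariant (for $\varepsilon$ one uses $\tr(s^{2})=\tr(s)$). As $s\mapsto s+s^{2}$ is $\gf(2)$-linear with image the whole trace-zero hyperplane, the orbit of $\beta$ is exactly $\{\,\beta'\colon \tr(\beta')=\varepsilon\,\}$, so $N_{(a,b,c)}$ depends only on the pair $(\mu,\varepsilon)$ and I may normalise $\beta$ to any fixed representative of trace $\varepsilon$, with $\gamma=\mu+\beta^{2}+\beta$ determined accordingly. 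Equivalently, identifying a $4$-subset with the monic separable quartic having it as root set, Newton's identities in characteristic $2$ turn the three constraints into $e_{1}=1$ together with $e_{3},e_{4}$ expressed affinely in a single free elementary-symmetric parameter $e_{2}$.

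\textbf{Step 3: Character-sum expansion.} Let $M$ be the number of \emph{ordered} quadruples $(x,y,z,u)\in\gf(q)^{4}$, distinctness temporarily dropped, meeting the three constraints. Detecting each constraint by additive characters yields
\[
M=\frac{1}{q^{3}}\sum_{\alpha,\lambda,\rho\in\gf(q)}(-1)^{\tr(\alpha+\lambda\beta+\rho\gamma)}\,S(\alpha,\lambda,\rho)^{4},
\]
where $S(\alpha,\lambda,\rho)=\sum_{x\in\gf(q)}(-1)^{\tr(\rho x^{5}+\lambda x^{3}+\alpha x)}$ is a Weil sum. Passing from $M$ to $24\,N_{(a,b,c)}$ (ordered quadruples of \emph{distinct} elements) is an inclusion--exclusion over coincidences among the four coordinates; the diagonal terms are lower-degree sums of the same type and, after evaluation, contribute the explicit constants $-5$ and $-3$ and the secondary prefactor $\tfrac{1}{12}$, while the generic term produces the prefactor $\tfrac{1}{24}$ and the main term $2^{m}$.

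\textbf{Step 4: Identifying the named sums and the main obstacle.} Because $m$ is odd, $\gcd(3,2^{m}-1)=\gcd(5,2^{m}-1)=1$, so $x\mapsto x^{3},x^{5}$ permute $\gf(q)$; this is the arithmetic input that lets the Weil sums $S(\alpha,\lambda,\rho)$, after monomial and translation substitutions, be re-expressed through the three model sums $K_{m}$, $C_{m}(1,\cdot)$ and $G_{m}$ at arguments built from $\mu+1$ (equivalently, one counts directly the values of the free parameter $e_{2}$ for which the associated quartic splits into four distinct roots, via the resolvent cubic and its trace conditions, and the same three sums appear). Collecting the main term, the corrections $G_{m}(\mu+1,\mu+1)$, $K_{m}(\mu+1)$, $C_{m}(1,(\mu+1)^{1/3})$, the parity factor $(-1)^{\varepsilon+1}$, and the diagonal constants gives the stated formula, with $\mu=1$ the degenerate locus forcing $N_{(a,b,c)}=0$. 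The hard part is exactly this identification step: pinning the quartic Weil sums to \emph{these three} sums at the precise arguments, and the bookkeeping of the degenerate contributions that generate the constants $-5,-3$ and the fraction $\tfrac{1}{12}$ — this requires the detailed factorisation theory of quartics over $\gf(2^{m})$ and the monomial-substitution lemmas valid only because $m$ is odd.
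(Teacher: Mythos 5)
The paper contains no proof of this lemma: it is imported verbatim from the reference \cite{Johansen2009} (Johansen--Helleseth), so your attempt can only be judged on its own merits, not against an in-paper argument. On those merits, your Steps 1--2 are correct and genuinely useful: the scaling $x\mapsto ax$ reduces to $a=1$ with $\beta=b/a^3$, $\gamma=c/a^5$, and the translation $x_i\mapsto x_i+s$ indeed sends $\beta\mapsto\beta+s+s^2$, $\gamma\mapsto\gamma+s+s^4$, fixing both $\mu=\gamma+\beta^2+\beta$ and $\varepsilon=\tr(\beta)$; since $s\mapsto s+s^2$ surjects onto the trace-zero hyperplane, $N_{(a,b,c)}$ depends only on $(\mu,\varepsilon)$, which is consistent with the shape of the stated formula. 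Step 3's orthogonality identity for the ordered count $M$ is also correct, but routine.

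The genuine gap is that the entire content of the lemma sits in what you call ``the hard part'' and never carry out. Nothing in the sketch evaluates the Weil sums $S(\alpha,\lambda,\rho)$, nor shows why the answer involves precisely $G_m(\mu+1,\mu+1)$, $K_m(\mu+1)$ and $C_m(1,(\mu+1)^{1/3})$ at the argument $\mu+1$ rather than some other combination; the sign $(-1)^{\varepsilon+1}$, the constants $-5$ and $-3$, the prefactors $\tfrac1{24}$ and $\tfrac1{12}$, and the vanishing $N_{(a,b,c)}=0$ when $\mu=1$ are all asserted as outcomes of a computation that is only named, not performed (your claim that the inclusion--exclusion over coincident coordinates ``contributes'' these constants is a guess about bookkeeping, not a derivation --- note, e.g., that in characteristic $2$ a coincidence $x=y$ collapses the quadruple system to a pair system, so these boundary terms are themselves nontrivial counts that must be evaluated). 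In short, Steps 1--3 are a correct normalization and setup that any proof would begin with, but the identification step that constitutes the lemma --- carried out in \cite{Johansen2009} via the factorization theory of the associated quartics/cubics over $\gf(2^m)$ --- is missing, so the proposal is a plan rather than a proof.
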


\section{Infinite families of $3$-designs from cyclic codes} \label{sec-des3}

Our task in this section is to study the extended code $\overline{\widehat{\C^{(e)}}}$ of the augmented code $\widehat{\C^{(e)}}$ of the cyclic code $\C^{(e)}$  and its dual $\overline{\widehat{\C^{(e)}}}^\perp$, and prove that these codes hold $3$-designs.

By definitions, it follows that the trace expression of $\overline{\widehat{\C^{(e)}}}$  is given by
\begin{eqnarray}\label{eq:trace}
\overline{\widehat{\C^{(e)}}}=\left\{\left(\tr(ax^{2^{2e}+1}+bx^{2^e+1}+cx+h)_{x \in \gf(q)}\right):a,b,c,h\in \gf(q)\right\},
\end{eqnarray}
where $q=2^m$, $e$ is a positive integer with $1\leq e\leq m-1$ and $e\notin \{\frac{m}{3},\frac{2m}{3}\}$, $m/\gcd(m,e)$ is odd. Note that the code $\overline{\widehat{\C^{(e)}}}$ is affine-invariant \cite{dingtang2020ccds}.

It is well known that the parameters of $\C^{(e)}$ is determined by Luo \cite{LJQ2010}, i.e., the code $\C^{(e)}$ defined by (\ref{ce}) is a $[q-1, 3m]$ linear codes with the weight distribution in Table \ref{tab-31}. The following result can be easily obtained from the parameters of $\C^{(e)}$  and we omit its proof.

\begin{table}[ht]
\begin{center}
\caption{The weight distribution of $\C^{(e)}$ for $ m/ \gcd(m,e)$ odd, where $d=\gcd(m,e)$.} \label{tab-31}
\begin{tabular}{|c|c|}
\hline
% after \\: \hline or \cline{col1-col2} \cline{col3-col4} ...
weight & multiplicity \\[2mm]
\hline
$0$ & $1$
\\[2mm]
\hline
$2^{m-1}-2^{(m+3d-2)/2}$ & $\frac{(2^{m-3d-1}+2^{(m-3d-2)/2})(2^{m-d}-1)(2^m-1)}{2^{2d}-1}$
\\[2mm]
\hline
$2^{m-1}-2^{(m+d-2)/2}$ & $\frac{(2^{m-d-1}+2^{(m-d-2)/2})(2^m-1)(2^{m+2d}-2^{m}-2^{m-d}+2^{2d})}{2^{2d}-1}$
\\[2mm]
\hline
$2^{m-1}$ & $\scriptstyle{(2^m-1)(2^{2m}-2^{2m-d}+2^{2m-4d}+2^m-2^{m-d}-2^{m-3d}+1)}$
\\[2mm]
\hline
$2^{m-1}+2^{(m+d-2)/2}$ & $\frac{(2^{m-d-1}-2^{(m-d-2)/2})(2^m-1)(2^{m+2d}-2^{m}-2^{m-d}+2^{2d})}{2^{2d}-1}$
\\[2mm]
\hline
$2^{m-1}+2^{(m+3d-2)/2}$& $\frac{(2^{m-3d-1}-2^{(m-3d-2)/2})(2^{m-d}-1)(2^m-1)}{2^{2d}-1}$
\\[2mm]
\hline
\end{tabular}
\end{center}
\end{table}

\begin{lemma}\label{lem-extend}
Let $m$ and $e$ be positive integers with $m\geq 5$ and  $1\leq e\leq m-1$, $e\notin \{\frac{m}{3},\frac{2m}{3}\}$ and $m/\gcd(m,e)$ odd. Let $\C=\C^{(e)}$ be defined by (\ref{ce}).
Then the extended code $\overline{\widehat{\C}}$  has parameters $[q, 3m+1,2^{m-1}-2^{(m+3d-2)/2}]$  and  the weight distribution in Table \ref{tab-extend31}, where $d=\gcd(m,e)$.
\end{lemma}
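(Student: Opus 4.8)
The plan is to reduce everything to the already-known weight distribution of $\C=\C^{(e)}$ in Table~\ref{tab-31} by exploiting the trace description~(\ref{eq:trace}). The first observation is that in~(\ref{eq:trace}) the constant $h\in\gf(q)$ enters only through $\tr(h)\in\{0,1\}$, so $\overline{\widehat{\C}}$ is the union of the two families obtained by fixing $\tr(h)=0$ and $\tr(h)=1$. The family with $\tr(h)=0$ is exactly the extended code $\overline{\C}$: for $x\neq 0$ the entry is unchanged, and the new coordinate indexed by $x=0$ evaluates to $\tr(0)=0$. The family with $\tr(h)=1$ is the coset $\overline{\C}+\bone$, where $\bone$ denotes the all-ones vector of length $q$. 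Hence $\overline{\widehat{\C}}=\overline{\C}\cup(\overline{\C}+\bone)$.

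Next I would pin down how extension acts on weights and check disjointness. Reading off Table~\ref{tab-31}, each nonzero weight of $\C$ is $2^{m-1}$ or $2^{m-1}\pm 2^{(m+d-2)/2}$ or $2^{m-1}\pm 2^{(m+3d-2)/2}$; since $m\ge 5$ together with $m/d$ odd and $e\notin\{m/3,2m/3\}$ (equivalently $m\neq 3d$) force $m/d\ge 5$, the exponents $(m+d-2)/2$ and $(m+3d-2)/2$ are positive integers, so every weight of $\C$ is even. Consequently the extension appends a $0$ to each codeword and $\wt(\overline{c})=\wt(c)$ for all $c\in\C$. In particular the length-$(q-1)$ all-ones word has odd weight $q-1$ and so does not lie in $\C$, whence $\bone\notin\overline{\C}$; the two cosets are therefore disjoint, giving length $q$, dimension $3m+1$, and showing $\bone\in\overline{\widehat{\C}}$.

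The weight distribution now follows from the coset decomposition: a codeword of weight $W$ is either a weight-$W$ codeword of $\C$ (from the $\tr(h)=0$ part) or the complement of a weight-$(q-W)$ codeword of $\C$ (from the $\tr(h)=1$ part), so $A_W(\overline{\widehat{\C}})=A_W(\C)+A_{q-W}(\C)$. Substituting the five nonzero weights and their multiplicities from Table~\ref{tab-31}, and using that complementation by $\bone$ pairs $2^{m-1}-2^{(m+3d-2)/2}$ with $2^{m-1}+2^{(m+3d-2)/2}$ and $2^{m-1}-2^{(m+d-2)/2}$ with $2^{m-1}+2^{(m+d-2)/2}$ while fixing $2^{m-1}$ and swapping the weights $0$ and $2^m$, yields the seven rows of Table~\ref{tab-extend31} after collecting terms; for instance the rows of weight $2^{m-1}\pm 2^{(m+3d-2)/2}$ both acquire multiplicity $2^{m-3d}(2^{m-d}-1)(2^m-1)/(2^{2d}-1)$. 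The smallest positive weight that survives is $2^{m-1}-2^{(m+3d-2)/2}$, which gives the claimed minimum distance.

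All of the arithmetic is routine; the one place that needs genuine care is the verification that every weight of $\C$ is even and that the relevant exponents are well-defined positive integers. This is exactly where the hypotheses $m\ge 5$, $m/d$ odd, and $e\notin\{m/3,2m/3\}$ are essential: they rule out $m=3d$, for which $2^{(m+3d-2)/2}=2^{m-1}$ and the putative minimum weight $2^{m-1}-2^{(m+3d-2)/2}$ would collapse to $0$, invalidating both the distance claim and the clean coset bookkeeping.
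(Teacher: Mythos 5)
Your proof is correct. The paper in fact omits the proof of this lemma altogether, saying only that it ``can be easily obtained from the parameters of $\C^{(e)}$''; your argument --- the decomposition $\overline{\widehat{\C}}=\overline{\C}\cup(\overline{\C}+\bone)$ coming from $\tr(h)\in\{0,1\}$, the verification that every weight of $\C$ is even (so that extension appends a $0$ and $\bone\notin\overline{\C}$, giving disjoint cosets and dimension $3m+1$), and the resulting identity $A_W(\overline{\widehat{\C}})=A_W(\C)+A_{q-W}(\C)$ --- is precisely the routine computation the authors intended, and your pairing of the rows of Table \ref{tab-31} reproduces Table \ref{tab-extend31} exactly. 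The only point worth tightening is the deduction that $m/\gcd(m,e)\ge 5$: besides excluding $m=3d$ via $e\notin\{\frac{m}{3},\frac{2m}{3}\}$, you must also exclude $m=d$, which follows from $d=\gcd(m,e)\le e\le m-1<m$.
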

\begin{table}[ht]
\begin{center}
\caption{The weight distribution of $\overline{\widehat{\C^{(e)}}}$  for $ m/ \gcd(m,e)$ odd, where $d=\gcd(m,e)$.} \label{tab-extend31}
\begin{tabular}{|c|c|}
\hline
% after \\: \hline or \cline{col1-col2} \cline{col3-col4} ...
weight & multiplicity \\[2mm]
\hline
$0$ & $1$
\\[2mm]
\hline
$2^{m-1}-2^{(m+3d-2)/2}$ & $\frac{(2^{m-3d})(2^{m-d}-1)(2^m-1)}{2^{2d}-1}$
\\[2mm]
\hline
$2^{m-1}-2^{(m+d-2)/2}$ & $\frac{(2^{m-d})(2^m-1)(2^{m+2d}-2^{m}-2^{m-d}+2^{2d})}{2^{2d}-1}$
\\[2mm]
\hline
$2^{m-1}$ & $\scriptstyle{2(2^m-1)(2^{2m}-2^{2m-d}+2^{2m-4d}+2^m-2^{m-d}-2^{m-3d}+1)}$
\\[2mm]
\hline
$2^{m-1}+2^{(m+d-2)/2}$ & $\frac{(2^{m-d})(2^m-1)(2^{m+2d}-2^{m}-2^{m-d}+2^{2d})}{2^{2d}-1}$
\\[2mm]
\hline
$2^{m-1}+2^{(m+3d-2)/2}$& $\frac{(2^{m-3d})(2^{m-d}-1)(2^m-1)}{2^{2d}-1}$
\\[2mm]
\hline
$2^m$                & $1$
\\
\hline
\end{tabular}
\end{center}
\end{table}

\begin{example}\label{exam-0101}
Let $(m,e)=(7,1)$. Then the code $\overline{\widehat{\C}}$ is a $[128,22,48]$ binary linear code with the weight enumerator $1+42672 z^{48}+877824 z^{56}+ 2353310 z^{64}+ 877824 z^{72}+42672 z^{80}+ z^{128}$. The dual code $\overline{\widehat{\C}}^\perp$ of $\overline{\widehat{\C}}$  has parameters $[128,106,8]$. Both $\overline{\widehat{\C}}$ and $\overline{\widehat{\C}}^\perp$ are optimal according to the tables of best known codes maintained at http: //www.codetables.de.
\end{example}

\begin{example}\label{exam-0102}
Let $(m,e)=(10,2)$. Then the code $\overline{\widehat{\C}}$ is a $[1024,31,384]$ binary linear code with the weight enumerator
$$1+278256 z^{384}+263983104 z^{480}+ 1618960926 z^{512}+ 263983104 z^{544}+278256 z^{640}+ z^{1024}.$$
\end{example}

Next we give the parameters of the dual code of $\overline{\widehat{\C^{(e)}}}$, which  will be employed later.
%Its proof is straightforward from Lemma \ref{lem-extend} and omitted here.

\begin{lemma}\label{lem-extenddual}
Let symbols and notation be the same as before. Let $m$ and $e$ be positive integers with $m\geq 5$ and  $1\leq e\leq m-1$, $e\notin \{\frac{m}{3},\frac{2m}{3}\}$ and $m/\gcd(m,e)$ odd. Let $\C=\C^{(e)}$ be defined by (\ref{ce}).
Denote $d=\gcd(m,e)$. Then the dual code $\overline{\widehat{\C}}^{~\perp}$  of $\overline{\widehat{\C}}$ has parameters $[q, q-3m-1,d']$, where
\begin{eqnarray*}
d'=
\left\{
\begin{array}{ll}
8,& \mbox{if $d=1$}\\[2mm]
4 ,& otherwise
\end{array}
 \right..
\end{eqnarray*}
In particular, the number of the minimum weight codewords in $\overline{\widehat{\C}}^{~\perp}$ is
\begin{align}\label{eq-A8}
A_8 (\overline{\widehat{\C}}^\perp)= \frac{1}{315} \cdot  2^{m-7} (2^m-1) & (-272 + 39\cdot  2^{2 + m} - 3 \cdot 4^{m+1} + 8^{m}),
\end{align}
when $d=\gcd(m,e)=1$.
\end{lemma}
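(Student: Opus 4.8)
The plan is to dispatch the dimension and the lower bounds on $d'$ by a support/power-sum analysis, and then reserve the heavy arithmetic for the count $A_8$ in the case $d=1$. The dimension is immediate: by Lemma \ref{lem-extend} the code $\overline{\widehat{\C}}$ has length $q$ and dimension $3m+1$, so $\overline{\widehat{\C}}^{\perp}$ has dimension $q-(3m+1)=q-3m-1$.

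For the distance I would first translate membership in the dual into power-sum conditions. Using the trace description \eqref{eq:trace} and the fact that $\tr(a\zeta)=0$ for all $a\in\gf(q)$ forces $\zeta=0$, a binary vector $(\lambda_x)_{x\in\gf(q)}$ with support $S$ lies in $\overline{\widehat{\C}}^{\perp}$ precisely when $|S|$ is even (forced by the all-ones codeword, obtained from $a=b=c=0$ and $\tr(h)=1$) and the three power sums $\sum_{x\in S}x$, $\sum_{x\in S}x^{2^e+1}$, $\sum_{x\in S}x^{2^{2e}+1}$ all vanish. This rules out weights $1,2,3$ at once: odd weights are impossible, and a weight-$2$ support $\{x_1,x_2\}$ would need $x_1+x_2=0$. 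Hence $d'\geq 4$ in all cases. To settle $d'=4$ when $d\geq 2$, I would exhibit an explicit codeword: since $\overline{\widehat{\C}}$ is affine-invariant and $\gcd(2e,m)=\gcd(e,m)=d$ (which follows from $m/d$ being odd), I take $S=\beta\{0,1,\gamma,1+\gamma\}$ with $\beta\in\gf(q)^*$ and $\gamma\in\gf(2^d)\setminus\gf(2)$, the latter set being nonempty exactly when $d\geq 2$. A direct check gives $\sum_{x\in S}x=0$, while the higher power sums reduce to $\beta^{2^e+1}(\gamma+\gamma^{2^e})$ and $\beta^{2^{2e}+1}(\gamma+\gamma^{2^{2e}})$, both of which vanish because $\gamma$ lies in the subfield $\gf(2^d)$ fixed by the $e$-th (hence also the $2e$-th) power of Frobenius.

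The same reduction explains the jump to $d'=8$ at $d=1$: a hypothetical weight-$4$ support can be normalized by affine-invariance to $\{0,a,b,a+b\}$, and vanishing of $\sum_{x} x^{2^e+1}=ab(a^{2^e-1}+b^{2^e-1})$ forces $a/b\in\gf(2^d)^{*}=\{1\}$, i.e. $a=b$, a contradiction; so no weight-$4$ codeword exists when $d=1$. To finish the case $d=1$ completely I would compute the low-weight spectrum of the dual from the Pless power moments \eqref{eq:PPM}, feeding in the explicit weight distribution of $\overline{\widehat{\C}}$ from Table \ref{tab-extend31} (a sum over only seven weights, so each moment $\sum_i i^{t}A_i(\overline{\widehat{\C}})$ is a closed-form expression in $m$). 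Solving the resulting triangular system for $t=0,1,\dots,8$ gives $A_1^{\perp}=\cdots=A_7^{\perp}=0$ together with an explicit value of $A_8^{\perp}$; the vanishing confirms $d'=8$ and the $t=8$ equation yields \eqref{eq-A8}.

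The genuinely laborious step, and the main obstacle, is evaluating and simplifying the eighth moment to the compact form in \eqref{eq-A8}; the lower-order moments only have to be shown to produce zeros, but the eighth must be reduced to the stated cubic in $2^m$. As a consistency check on that computation, the leading term $2^{5m-7}/315=2^{5m}/8!$ matches the heuristic count of $8$-subsets of $\gf(q)$ meeting the three $m$-fold power-sum conditions, its positivity reconfirms $d'=8$ exactly, and specializing to $(m,e)=(7,1)$ reproduces the $[128,106,8]$ dual of Example \ref{exam-0101} with $A_8^{\perp}=774192$.
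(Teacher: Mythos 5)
Your proposal is correct, but it reaches the minimum distance by a genuinely different route than the paper. The paper's proof is purely enumerative: it feeds the weight distribution of Table \ref{tab-extend31}, for general $d$, into the first nine Pless power moments (\ref{eq:PPM}), solves the triangular system for $A_1(\overline{\widehat{\C}}^\perp),\dots,A_8(\overline{\widehat{\C}}^\perp)$, and reads the distance off the resulting closed forms: $A_4>0$ exactly when $d\neq 1$, while for $d=1$ one gets $A_4=A_6=0$ and $A_8>0$, which is (\ref{eq-A8}). You instead determine $d'$ structurally: you characterize dual codewords by evenness of the support together with vanishing of the three power sums $\sum_{x\in S}x$, $\sum_{x\in S}x^{2^e+1}$, $\sum_{x\in S}x^{2^{2e}+1}$, kill weights at most $3$ immediately, exhibit explicit weight-$4$ codewords supported on the planes $\beta\{0,1,\gamma,1+\gamma\}$ with $\gamma\in\gf(2^d)\setminus\gf(2)$ when $d\ge 2$, and rule out weight $4$ when $d=1$ via $\gcd(2^e-1,2^m-1)=2^d-1=1$ (your reduction $\gcd(2e,m)=d$ from $m/d$ odd is the key point that makes the same argument work for the exponent $2^{2e}+1$); only then do you invoke the power moments, and only in the case $d=1$, to dispose of weight $6$ and to extract (\ref{eq-A8}) --- that final step coincides with the paper's computation. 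The trade-off: the paper's single moment computation is uniform in $d$ and delivers the exact counts $A_4$ and $A_6$ for every $d$ as a by-product, whereas your argument isolates the structural reason the distance jumps at $d=1$ (presence or absence of $\gf(2^d)$-subfield planes), produces explicit minimum-weight dual codewords for $d\ge 2$, and makes the distance claim independent of the heavy symbolic algebra; in fact your plane construction, counted with its $24$-fold reparametrization, recovers the paper's value $A_4=2^m(2^m-1)(2^{d-1}-1)/12$ exactly, though you only needed existence. Your consistency checks are sound as well: the leading term of (\ref{eq-A8}) is $2^{5m}/8!$, and the value $774192$ at $m=7$ agrees with the count stated in Example \ref{exam-02}.
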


\begin{proof}
By Lemma \ref{lem-extend}, it is obvious that the code $\overline{\widehat{\C}}^{~\bot}$ has length $q$ and dimension $q-3m-1$. From the weight distribution in Table \ref{tab-extend31} and the first nine Pless power moments in (\ref{eq:PPM}), we obtain that $A_i(\overline{\widehat{\C}}^\perp)=0$ for any $i\in \{1,2,3,5,7\}$  and

\begin{eqnarray*}
A_i(\overline{\widehat{\C}}^\perp)=
\left\{
\begin{array}{ll}
\frac{1}{24}(2^{m}-1)(2^{d + m} - 2^{1 +  m}), &  \mbox{if $i=4$}\\[2mm]
-\frac{1}{45}2^{m-4} (2^d-2) (-32+2^{2 + d} + 2^{1 + 2 d} + 8^d - 2^m (3 \cdot 2^d + 4^d-8)),     &  \mbox{if $i=6$}  \\   [2mm]
\frac{1}{315}
\cdot  2^{m-7} (2^m-1)  (8^m - 16 (213 - 77\cdot 2^{1 + d} + 7 \cdot 16^d) \\
+ 4^m (-132 + 91\cdot 2^{1 + d} - 27 \cdot 4^d - 27 \cdot 8^d + 16^d + 64^d)    \\
-2^m (-1380 + 357 \cdot 2^{2 + d} - 7 \cdot 2^{4 + 3 d} - 7 \cdot 4^{2 + d} - 7 \cdot 4^{1 + 2 d} + 32^d + 128^d)).
& \mbox{if $i=8$}
\end{array}
\right.
\end{eqnarray*}
When $d=1$, it is clear that $A_4(\overline{\widehat{\C}}^\perp)=A_6(\overline{\widehat{\C}}^\perp)=0$, $A_8(\overline{\widehat{\C}}^\perp)> 0$ and Equation (\ref{eq-A8}) follows. When $d\neq1$, from the above system of equations we have that $A_4(\overline{\widehat{\C}}^\perp)> 0$. This completes the proof.
\end{proof}

\begin{example}\label{exam-01}
Let $(m,e)=(5,2)$. Then $\overline{\widehat{\C}}$  and $\overline{\widehat{\C}}^\perp$ have the same parameters $[32,16,8]$ and the same weight enumerator
$1 + 620 z^{8} +  13888 z^{12} +  36518  z^{16}+ 13888 z^{20}+ 620 z^{24}+ z^{32}.$  Both $\overline{\widehat{\C}}$  and $\overline{\widehat{\C}}^\perp$ are optimal according to the tables of best known codes  maintained at http://www.codetables.de.
The number of the codewords of the minimum weight $8$ in $\overline{\widehat{\C}}^\perp$ is 620.
\end{example}

\begin{example}\label{exam-02}
 Let $(m,e)=(7,2)$. Then the code $\overline{\widehat{\C}}$ is a $[128,22,48]$ binary linear code with the weight enumerator $1+42672 z^{48}+877824 z^{56}+ 2353310 z^{64}+ 877824 z^{72}+42672 z^{80}+ z^{128}$. The dual code $\overline{\widehat{\C}}^\perp$ of $\overline{\widehat{\C}}$  has parameters $[128,106,8]$. Both $\overline{\widehat{\C}}$ and $\overline{\widehat{\C}}^\perp$ are optimal according to the tables of best known codes maintained at http: //www.codetables.de. The number of the codewords of the minimum weight $8$ in $\overline{\widehat{\C}}^\perp$ is $774192$.
\end{example}

\begin{example}\label{exam-0202}
Let $(m,e)=(10,4)$. Then the code $\overline{\widehat{\C}}$ is a $[1024,31,384]$ binary linear code with the weight enumerator
$$1+278256 z^{384}+263983104 z^{480}+ 1618960926 z^{512}+ 263983104 z^{544}+278256 z^{640}+ z^{1024}.$$
The dual code $\overline{\widehat{\C}}^\perp$ of $\overline{\widehat{\C}}$  has parameters $[1024,993,4]$. The number of the codewords of the minimum weight $4$ in $\overline{\widehat{\C}}^\perp$ is $87296$.

\end{example}

%The result in the following theorem is one of the main results of this paper.
In the following, we give some $3$-designs and determine their parameters.

\begin{theorem}\label{thm-3design}
Let symbols and notation be the same as before. Let $m$ and $e$ be positive integers with $m\geq 5$ and  $1\leq e\leq m-1$, $e\notin \{\frac{m}{3},\frac{2m}{3}\}$ and $m/\gcd(m,e)$ odd. Let $\C=\C^{(e)}$ be defined by (\ref{ce}) and $\gcd(m,e)=1$.
Then the code $\overline{\widehat{\C}}$ and its dual $\overline{\widehat{\C}}^\perp$ support $3$-designs. Furthermore, the minimum weight codewords of $\overline{\widehat{\C}}$ and $\overline{\widehat{\C}}^\perp$ support simple $3$-$(q,2^{m-1}-2^{(m+1)/2},\lambda)$ designs with
\begin{eqnarray}\label{eq:numtamin1}
\lambda= \frac{(2^{m-1}-1)(2^{m-1}-2^{(m+1)/2})(2^{m-1}-2^{(m+1)/2}-1)(2^{m-1}-2^{(m+1)/2}-2)}{24 (2^m-2)})
\end{eqnarray}
and
simple $3$-$(q,8,\lambda)$ designs where
\begin{eqnarray}\label{eq:numtamin2}
\lambda= \frac{ 336\cdot  A_8 (\overline{\widehat{\C}}^\perp)}{q(q-1)(q-2)}
\end{eqnarray}
 and $A_8 (\overline{\widehat{\C}}^\perp)$ was given in (\ref{eq-A8}), respectively.
\end{theorem}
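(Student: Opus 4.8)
The plan is to invoke the generalized Assmus--Mattson Theorem (Theorem~\ref{thm-designGAMtheorem}) to establish that both $\overline{\widehat{\C}}$ and $\overline{\widehat{\C}}^\perp$ support $3$-designs, and then to compute the two parameters $\lambda$ directly from the design equation~(\ref{eq:bb}). First I would record the data needed to apply the theorem with $t=3$. From Lemma~\ref{lem-extend} the code $\C:=\overline{\widehat{\C}}$ has length $\nu=q=2^m$ and minimum distance $d=2^{m-1}-2^{(m+3d-2)/2}=2^{m-1}-2^{(m+1)/2}$ (since $\gcd(m,e)=1$ forces the exponent $d$ in the table to equal $1$). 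From Lemma~\ref{lem-extenddual}, with $\gcd(m,e)=1$, the dual $\C^\perp=\overline{\widehat{\C}}^\perp$ has minimum distance $d^\perp=8$. Since $m\ge 5$, both $d$ and $d^\perp$ comfortably exceed $t=3$, so the hypothesis $t<\min\{d,d^\perp\}$ of Theorem~\ref{thm-designGAMtheorem} holds.

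The crux is to verify the hypotheses of Theorem~\ref{thm-designGAMtheorem} with a suitable choice of the parameter $s$ and the $s$-subset $S$. The idea is to exploit affine-invariance: the remark after~(\ref{eq:trace}) states that $\overline{\widehat{\C}}$ is affine-invariant, hence its automorphism group contains the affine group $\AGL(1,\gf(q))$ acting on the $q$ coordinates indexed by $\gf(q)$. This group is $3$-transitive (indeed the full affine group over $\gf(2^m)$ is $3$-homogeneous), so by the standard principle that a $t$-transitive automorphism group forces the supports of each fixed weight to form a $t$-design (cf.\ \cite[Theorem 4.18]{ding2018}), every nonzero weight class of both $\overline{\widehat{\C}}$ and its dual already yields a $3$-design. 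In fact this transitivity argument alone delivers the $3$-design conclusion, but to match the framework I would instead (or in addition) feed the weight enumerators into Theorem~\ref{thm-designGAMtheorem}: the code $\overline{\widehat{\C}}$ has only five nonzero weights strictly between $0$ and $\nu$ by Table~\ref{tab-extend31}, so one chooses $s$ so that the number of weights in $\{d,d+1,\ldots,\nu-t\}$ that one is \emph{not} assuming to be designs is small enough that $s+t-1$ is below $d^\perp$, and then checks that the dual has few enough low weights (by Lemma~\ref{lem-extenddual}, $A_i(\C^\perp)=0$ for $i\in\{1,2,3,5,7\}$ and the surviving weights $4,6,8$ are tightly controlled) for the $\ell^\perp$-design condition to be automatic in the range $0\le \ell^\perp\le s+t-1$.

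Once the $3$-design property is in hand, the two values of $\lambda$ follow by counting. For the minimum-weight codewords of $\C=\overline{\widehat{\C}}$, the blocks are the supports of the $A_{d}(\overline{\widehat{\C}})$ codewords of weight $w=d=2^{m-1}-2^{(m+1)/2}$; plugging $t=3$, $v=q=2^m$, $k=w$ into~(\ref{eq:bb}) and solving $\lambda=b\binom{k}{3}/\binom{v}{3}$, where $b=A_w(\overline{\widehat{\C}})=\frac{2^{m-3}(2^{m-1}-1)(2^m-1)}{2^2-1}$ from Table~\ref{tab-extend31} at $d=1$, I would simplify to obtain exactly the expression~(\ref{eq:numtamin1}). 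For the minimum-weight codewords of $\C^\perp$, the blocks are the $A_8(\overline{\widehat{\C}}^\perp)$ supports of weight $8$; here~(\ref{eq:bb}) with $k=8$ gives $\lambda=A_8(\overline{\widehat{\C}}^\perp)\binom{8}{3}/\binom{q}{3}=336\,A_8(\overline{\widehat{\C}}^\perp)/\bigl(q(q-1)(q-2)\bigr)$, which is precisely~(\ref{eq:numtamin2}), and $A_8(\overline{\widehat{\C}}^\perp)$ is supplied by~(\ref{eq-A8}). Finally, simplicity of these designs follows from the relevant clause of Theorem~\ref{thm-designGAMtheorem}: one checks that the minimum weight $k=d$ satisfies $k-\lfloor (k+q-2)/(q-1)\rfloor<d$ (for the dual, $k=8$ against $d^\perp=8$), which holds since over $\gf(2)$ distinct weight-$k$ codewords have distinct supports. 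The step I expect to require the most care is the bookkeeping in applying Theorem~\ref{thm-designGAMtheorem}, namely pinning down $s$ and verifying the dual-design condition across the whole range $0\le\ell^\perp\le s+t-1$; the affine-invariance observation is the clean shortcut that sidesteps most of this difficulty, so I would lead with it and use the AM-framework only as corroboration.
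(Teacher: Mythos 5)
Your leading argument---that affine-invariance gives a $3$-transitive automorphism group and hence $3$-designs---is wrong, and this is the genuine gap. Affine-invariance of $\overline{\widehat{\C}}$ means invariance under $\AGL(1,\gf(q))=\{x\mapsto ax+b:\ a\in\gf(q)^*,\ b\in\gf(q)\}$ (at best enlarged by Galois automorphisms to $\GaA(1,\gf(q))$) acting on the $q$ coordinates. This group is sharply $2$-transitive, not $3$-transitive, and for the range of parameters in the theorem it is not even $3$-homogeneous: its order $q(q-1)$ (resp.\ $mq(q-1)$) falls below $\binom{q}{3}$ as soon as $q>8$ (resp.\ for all odd $m\ge 7$; e.g.\ $7\cdot 128\cdot 127<\binom{128}{3}$). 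So the transitivity principle \cite[Theorem 4.18]{ding2018} yields only $2$-designs from this group. The $3$-transitive group you are implicitly invoking is $\AGL(m,2)$, the group of all $\gf(2)$-affine permutations of $\gf(2)^m$; that group is indeed $3$-transitive, but the code is not claimed (and not known) to be invariant under it---that is a special property of Reed--Muller-type codes. The isolated fact that ${\mathrm{\Gamma A}}(1,32)$ is $3$-homogeneous would rescue only the case $m=5$. This is precisely why the paper does not argue via affine-invariance at all.

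Your fallback route is, in substance, the paper's actual proof, and once made precise it is correct: take $t=3$ and $S=\{w_1,\dots,w_5\}$, the five nonzero weights of $\overline{\widehat{\C}}$ below $q$ (so $s=5$); then $\mathcal B_{\ell}(\overline{\widehat{\C}})$ is empty, hence a trivial $3$-design, for every $\ell\in\{d,\dots,q-3\}\setminus S$, and the dual hypothesis of Theorem~\ref{thm-designGAMtheorem} for $0\le \ell^{\perp}\le s+t-1=7$ holds trivially because $d^{\perp}=8$. One correction to your sketch: you say the dual's ``surviving weights $4,6,8$ are tightly controlled,'' but when $\gcd(m,e)=1$ Lemma~\ref{lem-extenddual} gives $A_4(\overline{\widehat{\C}}^{\perp})=A_6(\overline{\widehat{\C}}^{\perp})=0$, and this vanishing is not a convenience but the crux: if $A_4$ or $A_6$ were nonzero (as happens when $\gcd(m,e)>1$), the design hypothesis for $\ell^{\perp}\le 7$ could not be verified trivially and the Assmus--Mattson argument would collapse. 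Your two $\lambda$ computations via (\ref{eq:bb}), with $b=A_{w_1}(\overline{\widehat{\C}})$ and $b=A_8(\overline{\widehat{\C}}^{\perp})$ respectively (legitimate since distinct binary codewords of equal weight have distinct supports), do match (\ref{eq:numtamin1}) and (\ref{eq:numtamin2}). In short: demote the affine-invariance observation to the remark that it gives only $2$-designs, and promote the Assmus--Mattson bookkeeping---which you treated as mere corroboration---to the main and only argument.
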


\begin{proof}

When $\gcd(m,e)=1$, from Lemma \ref{lem-extend} we have that the code $\overline{\widehat{\C}}$ has six nonzero weights, i.e.,
$w_1=2^{m-1}-2^{(m+1)/2}$,  $w_2=2^{m-1}-2^{(m-1)/2}, w_3=2^{m-1}, w_4=2^{m-1}+2^{(m-1)/2}, w_5=2^{m-1}+2^{(m+1)/2}$ and $ w_6=2^m$. It is clear that
$\left ( \mathcal P(\overline{\widehat{\C}}), \mathcal B_{i}(\overline{\widehat{\C}}) \right )$ are trivial $3$-designs for $i\in \{w_1, w_1+1,...,q-3\} \setminus \{w_1,w_2,w_3,w_4,w_5\}$. By Lemma \ref{lem-extenddual},
the minimum  distance of $\overline{\widehat{\C}}^\perp$ is $8$ when $\gcd(m,e)=1$. Thus, $\left ( \mathcal P(\overline{\widehat{\C}}^\perp), \mathcal B_{i}(\overline{\widehat{\C}}^\perp) \right )$ are trivial $3$-designs for $0\leq i \leq 5+3-1$.
%By Lemmas \ref{lem-extend} and \ref{lem-extenddual},
From Theorem \ref{thm-designGAMtheorem} we then deduce that both $\overline{\widehat{\C}}$  and $\overline{\widehat{\C}}^\perp$ hold $3$-designs, and the minimum weight codewords in $\overline{\widehat{\C}}$  and $\overline{\widehat{\C}}^\perp$ support simple $3$-designs. Moreover, for the minimum weight $i =2^{m-1}-2^{(m+1)/2}$ in $\overline{\widehat{\C}}$ , the incidence structure $\left ( \mathcal P(\overline{\widehat{\C}}), \mathcal B_{i}(\overline{\widehat{\C}}) \right )$ is a simple $3$-$(q, 2^{m-1}-2^{(m+1)/2}, \lambda)$ design with $b$ blocks, where
\begin{eqnarray}\label{eq:b}
b=
\frac{A}{2-1}=A
\end{eqnarray}
and $A$ is the number of the the minimum weight codewords in $\overline{\widehat{\C}}$. Then the value of $\lambda$ in (\ref{eq:numtamin1}) follows from Lemma \ref{lem-extend}, Equations
(\ref{eq:bb}) and (\ref{eq:b}).

The proof of (\ref{eq:numtamin2}) is similar to that of (\ref{eq:numtamin1}).
For the minimum weight 8 in $\overline{\widehat{\C}}^\perp$ , the incidence structure $\left ( \mathcal P(\overline{\widehat{\C}})^\perp, \mathcal B_{8}(\overline{\widehat{\C}}^\perp) \right )$ is a simple $3$-$(q, 8, \lambda)$ design with
\begin{eqnarray}\label{eq:b1}
b=\frac{A_8 (\overline{\widehat{\C}}^\perp)}{2-1}=A_8 (\overline{\widehat{\C}}^\perp)
\end{eqnarray}
blocks, where $A_8 (\overline{\widehat{\C}}^\perp)$ was given in (\ref{eq-A8}). Then (\ref{eq:numtamin2}) follows from Equations (\ref{eq-A8}), (\ref{eq:b1})and (\ref{eq:bb}).
This completes the proof.
\end{proof}

\begin{example}\label{exam-1}
Let $(m,e)=(5,1)$. Then $\overline{\widehat{\C}}$  and $\overline{\widehat{\C}}^\perp$ have the same parameters $[32,16,8]$ and the same weight enumerator
$1 + 620 z^{8} +  13888 z^{12} +  36518  z^{16}+ 13888 z^{20}+ 620 z^{24}+ z^{32}.$  Both $\overline{\widehat{\C}}$  and $\overline{\widehat{\C}}^\perp$ are optimal according to the tables of best known codes  maintained at http://www.codetables.de.
The codewords of the minimum weight $8$ in $\overline{\widehat{\C}}$ (or $\overline{\widehat{\C}}^\perp$) support a $3$-$(32,8,7)$ design.
\end{example}

\section{Several shortened codes of binary cyclic codes} \label{sec-main}

In this section, we study some shortened codes of binary cyclic codes $\C^{(e)}$ and determine their parameters. Some of these shortened codes are optimal or almost optimal.

The following result can be easily obtained  and will be needed later.

\begin{lemma}\label{lem-cf}
Let $m,e,t$ be positive integers with $m\geq 5$ and  $1\leq e\leq m-1$, $e\notin \{\frac{m}{3},\frac{2m}{3}\}$ and $m/\gcd(m,e)$ odd. Let $\C=\C^{(e)}$ be defined by (\ref{ce}). %Denote $d=\gcd(m,e)$.
Suppose that $\gcd(m,e)=1$, then the dual code $\C^\bot$ of $\C$ has parameters $[q-1, q-1-3m,7]$.
\end{lemma}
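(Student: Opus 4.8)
The plan is to realize $\C^\perp$ explicitly through its power-sum parity checks and then sandwich its minimum distance between $7$ and $7$ using the already-established facts about the extended code $\overline{\widehat{\C}}^\perp$.

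First I would record the explicit description of $\C^\perp$. Orthogonality to every codeword of (\ref{ce}) together with the nondegeneracy of $\tr$ shows that a binary vector $c=(c_x)_{x\in\gf(q)^*}$ lies in $\C^\perp$ if and only if
\begin{equation*}
\sum_{x\in\gf(q)^*}c_x x=\sum_{x\in\gf(q)^*}c_x x^{2^e+1}=\sum_{x\in\gf(q)^*}c_x x^{2^{2e}+1}=0,
\end{equation*}
the conjugate conditions being automatic over $\gf(2)$. Thus a weight-$w$ codeword of $\C^\perp$ is exactly a subset $S\subseteq\gf(q)^*$ of size $w$ on which the three power sums $\sum_{x\in S}x^{s}$, $s\in\{1,2^e+1,2^{2e}+1\}$, vanish. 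The length $q-1$ and the dimension $q-1-3m$ are immediate from $\dim\C=3m$ (Table~\ref{tab-31}), so only the minimum distance $d^\perp=7$ remains.

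For the lower bound I would compare with $\overline{\widehat{\C}}^\perp$, whose trace description (\ref{eq:trace}) shows that $c=(c_x)_{x\in\gf(q)}\in\overline{\widehat{\C}}^\perp$ if and only if $\sum_{x\in\gf(q)}c_x=0$ together with the same three power-sum conditions (now summed over all of $\gf(q)$, the extra coordinate being $x=0$). Given a nonzero $c'\in\C^\perp$ of weight $w$ with support $S\subseteq\gf(q)^*$: if $w$ is even, extending $c'$ by $c_0=0$ yields a codeword of $\overline{\widehat{\C}}^\perp$ of weight $w$; if $w$ is odd, setting $c_0=1$ yields one of weight $w+1$ (the power sums are unchanged since $0^s=0$ for $s\ge1$, and the overall weight is now even). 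By Lemma~\ref{lem-extenddual} the minimum distance of $\overline{\widehat{\C}}^\perp$ equals $8$ when $\gcd(m,e)=1$, so in either case $w\ge7$; hence $d^\perp\ge7$. For the matching upper bound I would produce a weight-$7$ codeword by puncturing a weight-$8$ codeword of $\overline{\widehat{\C}}^\perp$ at the coordinate $x=0$. Lemma~\ref{lem-extenddual} gives $A_8(\overline{\widehat{\C}}^\perp)>0$, and by Theorem~\ref{thm-3design} the supports of the minimum-weight codewords of $\overline{\widehat{\C}}^\perp$ form a simple $3$-$(q,8,\lambda)$ design; being in particular a $1$-design, every coordinate---including $x=0$---lies in the support of some weight-$8$ codeword $c$. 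Restricting $c$ to $\gf(q)^*$ deletes the nonzero entry at $x=0$ while preserving all three power-sum conditions (again $0^s=0$), so it gives a codeword of $\C^\perp$ of weight $7$; therefore $d^\perp\le7$, and $d^\perp=7$.

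The main obstacle, and the only nonroutine point, is the upper bound: knowing merely that $A_8(\overline{\widehat{\C}}^\perp)>0$ does not suffice, because one must exhibit a weight-$8$ codeword whose support actually contains the coordinate $x=0$. This is precisely where the affine-invariance of $\overline{\widehat{\C}}$ (equivalently, the transitivity furnished by the design in Theorem~\ref{thm-3design}) is essential: it forces every coordinate to be covered and thereby upgrades the bare positivity $A_8>0$ into the desired puncturing. The same bookkeeping in fact yields the count $A_7(\C^\perp)=8\,A_8(\overline{\widehat{\C}}^\perp)/q$.
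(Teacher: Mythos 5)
Your proof is correct, but it takes a genuinely different route from the paper's. The paper's own proof is a short direct computation: since the weight distribution of $\C$ is known (Table \ref{tab-31}), the first eight Pless power moments (\ref{eq:PPM}) successively determine $A_i(\C^\perp)$ for $0\le i\le 7$, and this computation gives $A_i(\C^\perp)=0$ for $1\le i\le 6$ and $A_7(\C^\perp)>0$, whence $d^\perp=7$ (length and dimension being immediate, exactly as in your first step). You instead argue structurally: writing $\C^\perp$ and $\overline{\widehat{\C}}^\perp$ as power-sum parity-check codes, you obtain the lower bound $d^\perp\ge 7$ by extending a dual codeword of odd weight $w$ by $c_0=1$ (the weight becomes even and the power sums are unchanged since $0^s=0$), landing in $\overline{\widehat{\C}}^\perp$, whose minimum distance is $8$ by Lemma \ref{lem-extenddual}; and you obtain the upper bound by puncturing at $x=0$ a weight-$8$ codeword of $\overline{\widehat{\C}}^\perp$ passing through that coordinate, whose existence follows because the weight-$8$ supports form in particular a $1$-design by Theorem \ref{thm-3design} (equivalently, by affine-invariance). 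You correctly identified that bare positivity of $A_8(\overline{\widehat{\C}}^\perp)$ would not suffice here. The argument is also non-circular: Theorem \ref{thm-3design} rests only on Lemmas \ref{lem-extend}, \ref{lem-extenddual} and Theorem \ref{thm-designGAMtheorem}, none of which uses the present lemma. What your route buys is a conceptual explanation of why $d^\perp$ is exactly one less than the minimum distance of $\overline{\widehat{\C}}^\perp$, plus the bonus identity $A_7(\C^\perp)=8A_8(\overline{\widehat{\C}}^\perp)/q$ (valid because binary codewords are determined by their supports and the design is simple). What the paper's route buys is elementarity and self-containment: it needs only Table \ref{tab-31} and the moment identities, and the same computational pattern is reused repeatedly in Section \ref{sec-main} to determine the shortened codes' weight distributions.
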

\begin{proof}
Note that the code $\C$ has length $q-1$ and dimension $3m$. This means that the dual code $\C^\bot$ of $\C$ has length $q-1$ and dimension $q-1-3m$. From the weight distribution in Table \ref{tab-31} and the first eight Pless power moments in (\ref{eq:PPM}), it is easily obtain that $A_7(\C^\perp)> 0$ and  $A_i(\C^\perp)=0$ for any $i\in \{1,2,3,4,5,6\}$.
The desired conclusions then follow.
\end{proof}

Let $T$  be a set of $t$ coordinate positions in $\C$ (i.e., $T$ is a $t$-subset of $\mathcal P(\C)$ ). Define
$$\Lambda _{T,w}(\C)=\{\mathrm{Supp}(\mathbf c):~wt(\mathbf{c})=w, ~\mathbf{c}\in \C~and~T\subseteq \mathrm{Supp}(\mathbf c) \}$$
and $\lambda _{T,w}(\C) = \# \Lambda _{T,w}(\C)$.

Let $\C=\C^{(e)}$ be defined by (\ref{ce}). In this section, we regard $\gf(q)^*$ as the set of the coordinate
positions $\mathcal P(\C)$  of $\C$. In the following, we will consider some shortened code $\C_T$ of $\C$ for the case $m/\gcd(m,e) $ odd and $t \geq 1$, and determine their parameter.

\subsection{Shortened codes for the case $t=1$ or $t=2$ }

In this subsection, we will consider the shortened code $\C_{T}$ and determine its parameters when $t=1$ or $t=2$.

\begin{theorem}\label{main-design1}
Let $m,e,t$ be positive integers with $1\leq e\leq m-1$, $e\notin \{\frac{m}{3},\frac{2m}{3}\}$ and $m/\gcd(m,e)$ odd. Let $\C=\C^{(e)}$ be defined by (\ref{ce}). Suppose that $\gcd(m,e)=1$,
then we have the following results.
\begin{itemize}
  \item   If $t=1$, then the shortened code $\C_{T}$ is a $[2^{m}-2, 3m-1,   2^{m-1}-2^{(m+1)/2}]$ binary linear code with the weight distribution in Table \ref{tab-t1}.
  \item   If $t=2$, then the shortened code $\C_{T}$ is a $[2^{m}-3, 3m-2,   2^{m-1}-2^{(m+1)/2}]$ binary linear code with the weight distribution in Table \ref{tab-t2}.
  %\item   If $t=3$, then the shortened code $\C_{T}$ is a $[2^{m}-3, 2m-2, 2^{m-1} - 2^{(m -1)/2}]$ binary linear code with the weight distribution in Table \ref{tab-des13}.
\end{itemize}
\end{theorem}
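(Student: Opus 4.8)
The plan is to compute the parameters of the shortened code $\C_T$ by combining the general shortening machinery of Lemma~\ref{lem:C-S-P} with the fact that $\C^{(e)}$ (and hence its augmented/extended versions) supports designs, so that the weight-counting can be reduced to design parameters. First I would record the ingredients already at hand: by Table~\ref{tab-31} the code $\C$ has length $q-1$, dimension $3m$, minimum distance $d = 2^{m-1}-2^{(m+1)/2}$ (the smallest weight when $d=\gcd(m,e)=1$), and by Lemma~\ref{lem-cf} its dual has minimum distance $d^\perp = 7$. Since $t \in \{1,2\} < \min\{d, d^\perp\} = 7$, the second bullet of Lemma~\ref{lem:C-S-P} immediately gives that $\C_T$ has dimension $3m - t$ and length $(q-1) - t$, which accounts for the claimed lengths $2^m - 2$ and $2^m - 3$ and dimensions $3m-1$, $3m-2$.

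The substantive part is the weight distribution, and for this I would use the design structure. The key observation is that for a code supporting $t$-designs, the number $\lambda_{T,w}(\C)$ of weight-$w$ codewords whose support contains a fixed $t$-set $T$ is independent of the choice of $T$; it equals the replication-type number of the corresponding $w$-design, computable from $A_w(\C)$ via the standard counting identity (essentially Equation~\eqref{eq:bb} applied to the design on the supports). The weight-$w$ codewords of the shortened code $\C_T$ are, up to the deletion of the $t$ coordinates in $T$, exactly the weight-$w$ codewords of $\C$ that vanish on all of $T$ (since $\C_T$ is $\C(T)$ punctured on $T$). So I would express each $A_w(\C_T)$ by an inclusion–exclusion over how many of the $t$ shortened coordinates the support of a codeword of $\C$ meets, writing
\[
A_w(\C_T) = \sum_{j=0}^{t} (-1)^j \binom{t}{j}\,\lambda_{T_j, w}(\C),
\]
where $\lambda_{T_j,w}$ is the number of weight-$w$ codewords of $\C$ containing a fixed $j$-subset of $T$ in their support, and these numbers are furnished by the $j$-design property for $j \le t \le 2$. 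Concretely, for $t=1$ I need the $1$-design (point) counts and for $t=2$ the $2$-design (pair) counts derived from Table~\ref{tab-31}; feeding these into the inclusion–exclusion yields the entries of Tables~\ref{tab-t1} and \ref{tab-t2}. The minimum distance of $\C_T$ remains $2^{m-1}-2^{(m+1)/2}$ because shortening cannot decrease the distance and one checks that weight-$d$ codewords vanishing on $T$ still exist (their count $A_d(\C_T)$ comes out positive).

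One delicate point is justifying that $\C^{(e)}$ itself supports the requisite $1$- and $2$-designs on each weight class: the Assmus–Mattson argument in Theorem~\ref{thm-designGAMtheorem} and the affine-invariance noted after Equation~\eqref{eq:trace} are naturally phrased for the extended code $\overline{\widehat{\C}}$, and I would need to transfer the $t$-design (for $t=2$) or at least the requisite homogeneity down to $\C$ on $\gf(q)^*$, or alternatively argue the counts $\lambda_{T_j,w}(\C)$ directly from the explicit multiplicities in Table~\ref{tab-31} together with the Pless power moments as in Lemma~\ref{lem-extenddual}.

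\emph{The main obstacle} I anticipate is precisely this transfer/uniformity step: ensuring that the pair-containment numbers $\lambda_{T,w}(\C)$ are genuinely constant over all $2$-subsets $T$ of $\gf(q)^*$, rather than depending on the pair chosen. If the raw $2$-design property is not directly available for $\C$ on the punctured coordinate set, I would instead compute $A_w(\C_T)$ by a direct double-counting / Pless-power-moment computation (tracking $\sum_w A_w(\C_T)$, $\sum_w w A_w(\C_T)$, etc.) using the known full distribution of $\C$ from Table~\ref{tab-31}; this circumvents the design-uniformity question at the cost of more arithmetic, and it is the route I would expect the authors to take to pin down the exact tabulated multiplicities. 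The rest — reading off lengths, dimensions, and the invariance of the minimum distance — is routine bookkeeping.
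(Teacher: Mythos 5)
Your length/dimension computation is exactly the paper's (Lemma~\ref{lem:C-S-P} plus Lemma~\ref{lem-cf}), and your \emph{fallback} is in substance the paper's actual proof, while your primary, design-theoretic route is genuinely different. The paper never establishes (and never needs) any design property of $\C$ itself: it observes that $(\C_T)^\perp=(\C^\perp)^T$ by Lemma~\ref{lem:C-S-P}, and since $d(\C^\perp)=7$, puncturing on $t\le 2$ positions leaves minimum distance at least $5$, so $A_i\left((\C_T)^\perp\right)=0$ for $i\in\{1,2,3,4\}$; moreover the nonzero weights of $\C_T$ lie among the five nonzero weights of $\C$ in Table~\ref{tab-31}. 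The first five Pless power moments then give five linear equations in these five unknown frequencies, whose right-hand sides involve only $A_0,\dots,A_4$ of $(\C_T)^\perp$ and hence are the same for every choice of $T$; solving this Vandermonde-type system produces Tables~\ref{tab-t1} and~\ref{tab-t2}. That is the precise mechanism by which the moment method ``circumvents the design-uniformity question'': the answer is forced to be independent of $T$. Be careful with your phrase ``direct double-counting \dots using the known full distribution of $\C$'': double counting against Table~\ref{tab-31} only yields the \emph{average} of $A_w(\C_T)$ over all $t$-subsets $T$, which does not determine $A_w(\C_T)$ for a fixed $T$; what closes the system for each individual $T$ is the vanishing of the dual coefficients up to weight $4$, a fact your write-up never explicitly states even though you quote $d(\C^\perp)=7$.

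As for your primary route, the gap you flag is real: Theorem~\ref{thm-3design} gives $3$-designs for $\overline{\widehat{\C}}$ and its dual, not for $\C$, and $\C$'s automorphism group is only transitive (not $2$-homogeneous) on $\gf(q)^*$, so the pair-uniformity of $\lambda_{T,w}(\C)$ needed for your inclusion--exclusion at $t=2$ is not available off the shelf. It can be repaired: $\C$ is exactly the shortening of $\overline{\widehat{\C}}$ at the coordinate $x=0$ (a codeword of $\overline{\widehat{\C}}$ vanishes at $0$ iff $\tr(h)=0$, and then its restriction to $\gf(q)^*$ lies in $\C$), so each $\lambda_{T,w}(\C)$ is an alternating sum of covering numbers of sets of size at most $3$ in $\overline{\widehat{\C}}$, which are constants by the $3$-design property. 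But this costs Theorem~\ref{thm-3design} as a prerequisite plus a transfer argument, whereas the paper's computation needs only Lemma~\ref{lem-cf}, Lemma~\ref{lem:C-S-P}, Table~\ref{tab-31} and the power moments.
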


\begin{proof}
By Lemma \ref{lem-cf}, the minimum  distance of  $\C^\bot$ is 7. Thus, from definitions and Lemma \ref{lem:C-S-P} we have
\begin{align}\label{eq-AA}
& A_i\left ( \left (\mathcal C_{T} \right )^{\perp}  \right )=A_i\left ( \left (\mathcal C^{\perp} \right )^{T}  \right )=0
\end{align}
for any $i\in \{1,2,3,4\}$. Then the desired conclusions follow from (\ref{eq-AA}), the definitions, Lemma \ref{lem:C-S-P}, Table \ref{tab-31} and the first five Pless power moments of (\ref{eq:PPM}). This completes the proof.
\end{proof}

%\begin{rem}
%Notice that the code $\C$ holds $2$-designs. The results of Theorem \ref{main-design1}  can be derived from Theorem 13 of \cite{XTD2020} and Table \ref{tab-31}.
%\end{rem}

\begin{table}[ht]
\begin{center}
\caption{The weight distribution of $\C_T$ for $t=1$.} \label{tab-t1}
\begin{tabular}{|c|c|}
\hline
% after \\: \hline or \cline{col1-col2} \cline{col3-col4} ...
weight & multiplicity \\[2mm]
\hline
$0$ & $1$
\\[2mm]
\hline
$2^{m-1}-2^{(m+1)/2}$ & $\frac{2^{(m-13)/2} (2^m-2) (-8+ 3\cdot 2^{(m+3)/2} +2^{m+3}+2^{(3m+1)/2}   )}{3}$
\\[2mm]
\hline
$2^{m-1}-2^{(m-1)/2}$ & $\frac{2^{(m-9)/2} (8+5\cdot 2^m) (-4+2^{m+2}+2^{(3m+1)/2}   )}{3}$
\\[2mm]
\hline
$2^{m-1}$ & $\frac{(2^m-2) (16+ 3\cdot 2^{m+1}+9 \cdot 4^m )}{32}$

\\[2mm]
\hline
$2^{m-1}+2^{(m-1)/2}$ & $\frac{2^{(m-9)/2} (8+5\cdot 2^m) (4-2^{m+2}+2^{(3m+1)/2}   )}{3}$
\\[2mm]
\hline
$2^{m-1}+2^{(m+1)/2}$&  $\frac{2^{(m-13)/2} (2^m-2) (8+3 \cdot 2^{(m+3)/2}-2^{m+3}+2^{(3m+1)/2}   )}{3}$
\\[2mm]
\hline
\end{tabular}
\end{center}
\end{table}

\begin{table}[ht]
\begin{center}
\caption{The weight distribution of $\C_T$ for $t=2$.} \label{tab-t2}
\begin{tabular}{|c|c|}
\hline
% after \\: \hline or \cline{col1-col2} \cline{col3-col4} ...
weight & multiplicity \\[2mm]
\hline
$0$ & $1$
\\[1mm]
\hline
$2^{m-1}-2^{(m+1)/2}$ & $ \frac{2^{(m-15)/2} ( 32+9\cdot 2^{3(m+1)/2} -2^{m+4}-5\cdot 2^{(m+7)/2} +2^{(1+5m)/2}+3\cdot 4^{m+1} ) }{3}$
\\[2mm]
\hline
$2^{m-1}-2^{(m-1)/2}$ & $ \frac{2^{(m-11)/2} (8+5 \cdot 2^m)(-8+3\cdot 2^{m+1}+2^{(m+3)/2} + 2^{(1+3m)/2}  )    }{3}$
\\[2mm]
\hline
$2^{m-1}$ & $\frac{(2^m-4)(16+3 \cdot 2^{m+1}+9 \cdot 4^m)     }{64}$

\\[2mm]
\hline
$2^{m-1}+2^{(m-1)/2}$ & $ \frac{2^{(m-11)/2} (8+5 \cdot 2^m)(8-3\cdot 2^{m+1}+2^{(m+3)/2} + 2^{(1+3m)/2}  )    }{3}$
\\[2mm]
\hline
$2^{m-1}+2^{(m+1)/2}$ &  $ \frac{2^{(m-15)/2} ( -32+9\cdot 2^{3(m+1)/2} +2^{m+4}-5\cdot 2^{(m+7)/2} +2^{(1+5m)/2}-3\cdot 4^{m+1} ) }{3}$
\\[2mm]
\hline
\end{tabular}
\end{center}
\end{table}

\begin{example}\label{exa-des11}
Let $m=5$, $e=1$ and $T$ be a $1$-subset of $\mathcal P(\C)$. Then the shortened code $\C_{T}$  in Theorem \ref{main-design1} is a $[30,14,8]$ binary linear code with the weight enumerator $1+345z^{8}+5320z^{12}+8835z^{16}+1848z^{20}+35z^{24}$. The code $\C_{T}$ is optimal. The dual code of $\C_{T}$ has parameters $[30,16,6]$ and is almost optimal according to the tables of best known codes maintained at http: //www.codetables.de.
\end{example}

\begin{example}\label{exa-des12}
Let $m=5$, $e=2$ and $T$ be a $2$-subset of $\mathcal P(\C)$. Then the shortened code $\C_{T}$  in Theorem \ref{main-design1} is a $[29,13,8]$ linear code with the weight enumerator $1+253 z^{8}+3192 z^{12}+4123 z^{16}+616 z^{20}+7 z^{24}$. The code $\C_{T}$ is optimal. The dual code of $\C_{T}$ has parameters $[29,16,5]$ and is almost optimal according to the tables of best known codes   maintained at http://www.codetables.de.
\end{example}

\subsection{Some shortened codes for the case $t=3$ or $t=4$ }

If $t=3$ or $t=4$, Magma programs show that the weight distribution of $\C_{T}$ is very complex for any $T=\{x_1,x_2,x_3\}\subseteq \mathcal P(\C)$ or $T=\{x_1,x_2,x_3,x_4\}\subseteq \mathcal P(\C)$. Thus, it is difficult to determine their parameters in general.
In the following, we will study the $\C_{T}$ with the special set $T$ and $t=\#T=3$ (resp., $t=\#T=4$ ) in Theorem \ref{main-t33} (resp., Theorem \ref{main-t34} ). In order to determine the parameters of $\C_{T}$, we need the results in the following two lemmas.

\begin{lemma}\label{A4gel}
Let $m\geq 2$ be an integer, $q=2^m$ and $(b,c)\in (\gf(q)^*)^2$.  Let $N_{(b,c)}$ be the number of $\{x,y,z,u\}\subseteq \gf(q)$ satisfying the system of equations
\begin{eqnarray}\label{numA4gel}
\left\{
\begin{array}{ll}
x+y+z+u=0&\\[2mm]
x^{3}+y^{3}+z^{3}+u^{3}=b&\\[2mm]
x^{5}+y^{5}+z^{5}+u^{5}=c
\end{array}
 \right..
\end{eqnarray}
Suppose that the cubic equation
\begin{eqnarray}\label{namta3}
\lambda^3+\frac{c}{b} \lambda+b=0,
\end{eqnarray}
on $\lambda$ over $\gf(q)$ has three pairwise distinct solutions denoted by $\lambda_1$, $\lambda_2$ and $\lambda_3$,
then $\{x,y,z,u\}\subseteq \gf(q)$ satisfying (\ref{numA4gel}) iff
  $$
  \{x,y,z,u\}=\{\lambda, \lambda+\lambda_1, \lambda+\lambda_2, \lambda+\lambda_3 \}
  $$
  where $\lambda \in \gf(q)$. Furthermore, $N_{(b,c)}=2^{m-2}$.
\end{lemma}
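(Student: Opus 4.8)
The plan is to first obtain an exact description of the solution sets and then count the distinct ones. Since $x,y,z,u$ are required to be distinct, I would view them as the four roots of the monic polynomial $Q(\lambda)=(\lambda-x)(\lambda-y)(\lambda-z)(\lambda-u)=\lambda^4+e_1\lambda^3+e_2\lambda^2+e_3\lambda+e_4$ over $\gf(q)$, where the $e_i$ are the elementary symmetric functions and all signs collapse to $+$ in characteristic $2$. The first step is to translate the three power-sum conditions $p_1=0$, $p_3=b$, $p_5=c$ into conditions on the $e_i$ via Newton's identities over $\gf(2^m)$. This is a routine computation: reducing the relevant integer coefficients modulo $2$ and using $p_1=e_1=0$ forces $p_2=p_4=0$, then $p_3=e_3=b$ and $p_5=e_2 p_3=e_2 b=c$, so that $e_1=0$, $e_2=c/b$, $e_3=b$, while $e_4=xyzu$ remains free.

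For the necessity direction I would exploit the Frobenius map. Fixing one root, say $x$, I form $Q(\lambda+x)$; since $(\lambda+x)^4=\lambda^4+x^4$ and $(\lambda+x)^2=\lambda^2+x^2$ in characteristic $2$, the shifted polynomial collapses to
$$
Q(\lambda+x)=\lambda^4+\tfrac{c}{b}\lambda^2+b\lambda+Q(x)=\lambda\Bigl(\lambda^3+\tfrac{c}{b}\lambda+b\Bigr),
$$
using $Q(x)=0$ together with the values of $e_1,e_2,e_3$ found above. Hence the three nonzero roots of $Q(\lambda+x)$, namely $y+x$, $z+x$, $u+x$, are precisely the roots $\lambda_1,\lambda_2,\lambda_3$ of the cubic $(\ref{namta3})$, which gives $\{x,y,z,u\}=\{\lambda,\lambda+\lambda_1,\lambda+\lambda_2,\lambda+\lambda_3\}$ with $\lambda=x$. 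The sufficiency direction is the reverse reading of the same identity: for any $\lambda$, the elements $\lambda,\lambda+\lambda_1,\lambda+\lambda_2,\lambda+\lambda_3$ are the roots of $(\mu-\lambda)^4+\tfrac{c}{b}(\mu-\lambda)^2+b(\mu-\lambda)$, which expands (again by Frobenius) to a quartic in $\mu$ with $e_1=0$, $e_2=c/b$, $e_3=b$; Newton's identities then return $p_1=0$, $p_3=b$, $p_5=c$, so the system $(\ref{numA4gel})$ holds. Distinctness of the four entries is automatic because $\lambda_1,\lambda_2,\lambda_3$ are pairwise distinct and nonzero, the latter since evaluating the cubic at $0$ gives $b\neq0$.

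The final and most delicate step is the count $N_{(b,c)}=2^{m-2}$. Writing $S=\{0,\lambda_1,\lambda_2,\lambda_3\}$, the characterization says the solution $4$-subsets are exactly the additive translates $\lambda+S$ with $\lambda\in\gf(q)$. The key observation is that $S$ is an additive subgroup of $(\gf(q),+)$ of order $4$: the coefficient of $\lambda^2$ in the cubic is $0$, so $\lambda_1+\lambda_2+\lambda_3=0$, whence any two of $\lambda_1,\lambda_2,\lambda_3$ sum to the third and $S$ is closed under addition (every nonzero element having order $2$). Therefore $\lambda+S=\lambda'+S$ if and only if $\lambda+\lambda'\in S$, so the distinct translates are the cosets of $S$, and their number is the index $[\gf(q):S]=q/4=2^{m-2}$. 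I expect this coset/subgroup argument to be the main obstacle, not because it is hard once seen, but because one must recognize that the naive parametrization by $\lambda\in\gf(q)$ overcounts each solution exactly $|S|=4$ times; the vanishing of the quadratic coefficient of the cubic is exactly what makes $S$ a group rather than a mere $4$-element set.
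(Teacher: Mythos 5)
Your proof is correct, and it reaches the paper's conclusion by a recognizably different route. The paper substitutes $Y=y+x$, $Z=z+x$, $U=u+x$ to reduce (\ref{numA4gel}) to a three-variable system with $Y+Z+U=0$, then computes the elementary symmetric functions $\sigma_1=0$, $\sigma_2=c/b$, $\sigma_3=b$ of $Y,Z,U$ from two power-sum identities, concluding that $Y,Z,U$ are the roots of (\ref{namta3}); you instead keep all four unknowns as roots of a quartic $Q$, pin down its coefficients by Newton's identities in characteristic $2$, and obtain the same cubic by factoring the shifted quartic $Q(\lambda+x)=\lambda\left(\lambda^3+\frac{c}{b}\lambda+b\right)$, using that $\mu^4+\frac{c}{b}\mu^2+b\mu$ is an additive (linearized) polynomial. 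The two computations carry the same content---both identify the differences $y+x$, $z+x$, $u+x$ with $\lambda_1,\lambda_2,\lambda_3$---so up to this point the difference is one of packaging, though yours makes the structural reason visible: the solution sets are exactly the cosets of $\ker\left(\mu^4+\frac{c}{b}\mu^2+b\mu\right)$. Where your write-up genuinely improves on the paper is the final count. The paper passes from the parametrization by $\lambda\in\gf(q)$ directly to $N_{(b,c)}=2^{m-2}$ without comment, leaving implicit that each $4$-subset arises from exactly four values of $\lambda$; your observation that $S=\{0,\lambda_1,\lambda_2,\lambda_3\}$ is an additive subgroup (because $\lambda_1+\lambda_2+\lambda_3=0$, i.e., the quadratic coefficient of (\ref{namta3}) vanishes), so that the distinct solution sets are precisely the $q/4=2^{m-2}$ cosets of $S$, supplies exactly the justification the paper omits.
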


\begin{proof}
By the first equation of (\ref{numA4gel}), it is clear that
$$y+z+u=x,~~y^2+z^2+u^2=x^2,~~y^4+z^4+u^4=x^4.
$$
Further, we have
$$
(y+x)^3+(z+x)^3+(u+x)^3=y^3+z^3+u^3+x^3+x^2(y+z+u)+x(y^2+z^2+u^2)=y^3+z^3+u^3+x^3
$$
and
$$
(y+x)^5+(z+x)^5+(u+x)^5=y^5+z^5+u^5+x^5+x^4(y+z+u)+x(y^4+z^4+u^4)=y^5+z^5+u^5+x^5.
$$
Denote $Y=y+x$, $Z=z+x$ and $U=u+x$. Then the system (\ref{numA4gel}) can be reduced to
\begin{eqnarray}\label{numA4gel2}
\left\{
\begin{array}{ll}
Y+Z+U=0&\\[2mm]
Y^3+Z^3+U^3=b&\\[2mm]
Y^5+Z^5+U^5=c
\end{array}
 \right..
\end{eqnarray}
Note that
\begin{eqnarray}\label{Y1}
Y^3+Z^3+U^3+3YZU=(Y+Z+U)(Y^2+Z^2+U^2+YZ+ZU+UY)
\end{eqnarray}
and
\begin{eqnarray}\label{Y2}
Y^5+Z^5+U^5=\sigma_1 (Y^4+Z^4+U^4)+\sigma_2 (Y^3+Z^3+U^3)+\sigma_3 (Y^2+Z^2+U^2),
\end{eqnarray}
where $\sigma_i$ is the elementary symmetric polynomial, i.e., $\sigma_1=X+Y+Z$, $\sigma_2=YZ+ZU+UY$ and $\sigma_3=YZU$.
By the first two equations of (\ref{numA4gel2}), it is clear that
\begin{eqnarray}\label{xyz0}
\sigma_1=0.
\end{eqnarray}
and from Equation (\ref{Y1}) we have
\begin{eqnarray}\label{xyz1}
\sigma_3=YZU=b.
\end{eqnarray}
Meanwhile, by (\ref{numA4gel2}), from Equation (\ref{Y2}) we have $c=b\sigma_2$ which means that
\begin{eqnarray}\label{xyz2}
\sigma_2=\frac{c}{b}.
\end{eqnarray}
Combining Equations (\ref{xyz0}), (\ref{xyz1}) and (\ref{xyz2}), then $Y,~Z,~U$ are the three pairwise distinct solutions of (\ref{namta3}).
Without the loss of generality, we assume that $\lambda_1=Y, ~\lambda_2=Z, ~\lambda_3=U.$ Then
$$x=\lambda, ~y=\lambda+\lambda_1,~ z=\lambda+\lambda_2, ~u=\lambda+\lambda_3,$$
where $\lambda \in \gf(q)$. Conversely, it is not hard to verify that
$$\{x=\lambda, ~y=\lambda+\lambda_1,~ z=\lambda+\lambda_2, ~u=\lambda+\lambda_3 \}$$
satisfies (\ref{numA4gel}) for any $\lambda \in \gf(q)$. Thus $N_{(b,c)}=2^{m-2}$. This completes the proof.
\end{proof}

\begin{lemma}\label{A4N0}
Let $m\geq 3$ be odd, $q=2^m$ and $T=\{x_1,x_2,x_3\} \subseteq  \gf(q)^*$. Define $a=x_1+x_2+x_3$, $b=x_1^{3}+x_2^{3}+x_3^{3}$ and $c=x_1^{5}+x_2^{5}+x_3^{5} $.
Let $N(a)$ be the number of $\{x,y,z,u\}\subseteq \gf(q)$ satisfying the system of equations
\begin{eqnarray}\label{numA40}
\left\{
\begin{array}{ll}
x+y+z+u=a&\\[2mm]
x^{3}+y^{3}+z^{3}+u^{3}=b&\\[2mm]
x^{5}+y^{5}+z^{5}+u^{5}=c
 \end{array}
 \right.
\end{eqnarray}
such that $x_1,x_2,x_3,x,y,z,u$ are pairwise distinct. Then we have the following results.
\begin{itemize}
  \item If $a \neq 0$, then
\begin{eqnarray}\label{eqn-solution-an0}
N(a)&=&
\left\{
\begin{array}{ll}
0              &   \mbox{ if } \mu=1 \\
\frac{1}{24} \left(2^m-5+3G_m(\mu+1,\mu+1)\right)+\frac{1}{12}(-1)^{\varepsilon+1}    &   \mbox{ if } \mu \neq 1 \\
~~~~~~~~\times \left(K_m(\mu+1)+C_m(1,(\mu+1)^{1/3})-3\right) -1   &    ~\\
\end{array} \right.
\end{eqnarray}
where $\varepsilon$ and $\mu$ were defined in Lemma \ref{lem-solution}.

\item If $a = 0$, then $bc\neq 0$ and
  $N(a)=2^{m-2}-1$.
\end{itemize}
\end{lemma}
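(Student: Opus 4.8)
The plan is to count all solution $4$-subsets $\{x,y,z,u\}$ of the system (\ref{numA40}) and then delete those that intersect $T=\{x_1,x_2,x_3\}$, since the requirement that $x_1,x_2,x_3,x,y,z,u$ be pairwise distinct is exactly the requirement $\{x,y,z,u\}\cap T=\emptyset$ (the four unknowns already form a $4$-subset). The two cases are driven by different engines: Lemma \ref{lem-solution} supplies the total count when $a\neq0$, and Lemma \ref{A4gel} supplies it when $a=0$, so in each case the real content is to show that precisely one solution set must be discarded. The key preliminary observation is the \emph{universal solution}: because $x_1+x_2+x_3=a$, $x_1^3+x_2^3+x_3^3=b$ and $x_1^5+x_2^5+x_3^5=c$, the set $\{0,x_1,x_2,x_3\}$ always satisfies (\ref{numA40}), and it is a genuine $4$-subset since $x_1,x_2,x_3\in\gf(q)^*$ are distinct.

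For $a\neq0$ I would show that $\{0,x_1,x_2,x_3\}$ is the \emph{only} solution set meeting $T$, so that $N(a)=N_{(a,b,c)}-1$ with $N_{(a,b,c)}$ given by Lemma \ref{lem-solution}. Suppose a solution contains one of the $x_i$, say $x_1$, and write the other three entries as $\{x,y,z\}$. Working in characteristic $2$, these satisfy $x+y+z=a+x_1=x_2+x_3$, $x^3+y^3+z^3=x_2^3+x_3^3$ and $x^5+y^5+z^5=x_2^5+x_3^5$. Translating via Newton's identities into the elementary symmetric functions $e_1,e_2,e_3$ of $\{x,y,z\}$ gives $e_1=x_2+x_3$ together with two equations from the $p_3$- and $p_5$-conditions; since the relevant denominator is $e_1^3+(x_2^3+x_3^3)=x_2x_3(x_2+x_3)\neq0$, these determine $e_2,e_3$ uniquely, and one checks the solution is $e_2=x_2x_3$, $e_3=0$, whose defining cubic factors as $t(t+x_2)(t+x_3)$. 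Hence $\{x,y,z\}=\{0,x_2,x_3\}$ and the $4$-subset is $\{0,x_1,x_2,x_3\}$; the symmetric argument covers a shared $x_2$ or $x_3$. Plugging Lemma \ref{lem-solution} into $N(a)=N_{(a,b,c)}-1$ yields (\ref{eqn-solution-an0}). Note that the universal solution forces $N_{(a,b,c)}\geq1$, hence $\mu\neq1$, so the $\mu=1$ line is in fact vacuous and is listed only to match Lemma \ref{lem-solution}.

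For $a=0$ we have $x_3=x_1+x_2$, and a direct expansion in characteristic $2$ gives $b=x_1x_2x_3$ and $c=x_1x_2x_3(x_1^2+x_1x_2+x_2^2)$. The factor $x_1^2+x_1x_2+x_2^2$ is nonzero because $t^2+t+1$ has no root in $\gf(2^m)$ when $m$ is odd, so $bc\neq0$ as claimed. Moreover $c/b=x_1^2+x_1x_2+x_2^2$ is the second elementary symmetric function of $x_1,x_2,x_3$, while $b=x_1x_2x_3$ is the third and $a=0$ the first; therefore the cubic (\ref{namta3}) is exactly $(\lambda+x_1)(\lambda+x_2)(\lambda+x_3)$, with three distinct roots $x_1,x_2,x_3$. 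Lemma \ref{A4gel} then gives $2^{m-2}$ solution sets, namely the cosets of the order-$4$ additive subgroup $V=\{0,x_1,x_2,x_3\}$ (recall $x_1+x_2=x_3$). Since $T\subseteq V$, a coset meets $T$ iff it equals $V$; thus exactly one solution set is discarded and $N(0)=2^{m-2}-1$.

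The hard part will be the uniqueness step when $a\neq0$, i.e.\ proving that no solution other than $\{0,x_1,x_2,x_3\}$ can share an entry with $T$. This is where the Newton's-identity reduction to a $3$-subset and the nonvanishing of $x_2x_3(x_2+x_3)$ are essential; the remaining steps are routine, and the $a=0$ case is comparatively clean once Lemma \ref{A4gel} is invoked through the coset structure of $V$.
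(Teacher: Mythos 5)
Your proposal is correct and takes essentially the same approach as the paper, whose entire proof is a one-line citation of Lemmas \ref{lem-solution} and \ref{A4gel}: count all solution $4$-subsets via those two lemmas and subtract the unique one meeting $T$. Your write-up in fact supplies the details the paper omits---the universal solution $\{0,x_1,x_2,x_3\}$, the Newton-identity argument (with nonvanishing denominator $x_2x_3(x_2+x_3)$) showing it is the only solution meeting $T$ when $a\neq 0$, and the coset structure of $V=\{0,x_1,x_2,x_3\}$ when $a=0$---and each of these steps checks out.
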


\begin{proof}
The desired conclusions then follow from definitions and Lemmas \ref{lem-solution} and  \ref{A4gel}.
\end{proof}

\begin{theorem}\label{main-t33}
Let $m\geq 5$ be odd, $q=2^m$ and $\C=\C^{(1)}$ be defined by (\ref{ce}). Let $T=\{x_1,x_2,x_3\} \subseteq  \gf(q)^*$. Denote $a=x_1+x_2+x_3$.
Then the shortened code $\C_{T}$ is a $[2^{m}-4, 3m-3,   2^{m-1}-2^{(m+1)/2}]$ binary code with the weight distribution in Table \ref{tab-t3}, where $N=N(a)$ was given in Lemma \ref{A4N0}.
\end{theorem}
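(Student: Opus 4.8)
The plan is to determine the weight distribution of the shortened code $\C_T$ by relating each of its weights to counts of solutions of the defining system of equations in Lemma \ref{A4N0}, and then to assemble these counts into a full weight table. First I would recall the basic structure: since $\gcd(m,e)=1$ and $e=1$, Lemma \ref{lem-cf} gives that $\C^\perp$ has minimum distance $7$, so by Lemma \ref{lem:C-S-P} (with $t=3<7=d^\perp$) the shortened code $\C_T$ has dimension $3m-3$ and length $q-1-3 = 2^m-4$, which matches the claimed parameters. The stated minimum distance $2^{m-1}-2^{(m+1)/2}$ is inherited from the fact that the nonzero weights of $\C$ (Table \ref{tab-31}) are exactly $2^{m-1}\pm 2^{(m+1)/2}$, $2^{m-1}\pm 2^{(m-1)/2}$, and $2^{m-1}$, and shortening cannot decrease the minimum weight below the smallest of these.

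The heart of the argument is to count, for each weight $w$, the number of codewords of weight $w$ in $\C_T$. By definition $\C_T$ is obtained by taking the subcode $\C(T)$ of codewords that vanish on $T=\{x_1,x_2,x_3\}$ and then puncturing on $T$. A codeword of $\C$ corresponds to a triple $(a',b',c')\in\gf(q)^3$ via $\tr(a'x^{5}+b'x^{3}+c'x)$ (after reindexing the exponents $2^{2e}+1,2^e+1,1$ to $5,3,1$ for $e=1$), and its support is controlled by the trace. The key observation is that the weight of a codeword in $\C_T$, together with the constraint that it vanishes on $x_1,x_2,x_3$, translates into a condition counting the $x,y,z,u\in\gf(q)$ for which the power-sum system of Lemma \ref{A4N0} holds with the given $a=x_1+x_2+x_3$, $b=x_1^3+x_2^3+x_3^3$, $c=x_1^5+x_2^5+x_3^5$. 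Hence the multiplicities in Table \ref{tab-t3} will be expressed through $N=N(a)$ supplied by Lemma \ref{A4N0}, with the two cases $a\neq 0$ and $a=0$ feeding into the exponential-sum expression $G_m,K_m,C_m$ or the clean value $2^{m-2}-1$.

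To pin down the individual multiplicities I would use the Pless power moments \eqref{eq:PPM} applied to $\C_T$. Since $(\C_T)^\perp=(\C^\perp)^T$ by Lemma \ref{lem:C-S-P}, and $\C^\perp$ has minimum distance $7$, puncturing on a $3$-set leaves $A_i((\C_T)^\perp)=0$ for $i\in\{1,2,3,4\}$ (the same vanishing used in the proof of Theorem \ref{main-design1}). This gives enough low-order moment identities that, combined with the solution count $N(a)$ and the known total size $|\C_T|=2^{3m-3}$, the five nonzero weight multiplicities are uniquely determined as entries of Table \ref{tab-t3}. The main obstacle I anticipate is the bookkeeping that links the shortening-and-puncturing operation to the symmetric power-sum system: one must verify carefully that a codeword vanishing on $x_1,x_2,x_3$ and having a prescribed support among the remaining coordinates corresponds precisely to a set $\{x,y,z,u\}$ solving \eqref{numA40} with the parameters $(a,b,c)$ determined by $T$, and that the distinctness requirement on $x_1,x_2,x_3,x,y,z,u$ matches the combinatorial support condition. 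Once this correspondence is established cleanly, substituting $N(a)$ from Lemma \ref{A4N0} and solving the moment equations yields the table, completing the proof.
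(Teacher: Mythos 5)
There is a genuine gap, and it sits exactly at the point where the whole computation turns. You claim that since $\C^{\perp}$ has minimum distance $7$, puncturing on a $3$-set leaves $A_i\left(\left(\C_T\right)^{\perp}\right)=0$ for $i\in\{1,2,3,4\}$, ``the same vanishing used in the proof of Theorem \ref{main-design1}.'' That vanishing is only valid for $t\le 2$: puncturing a code of minimum distance $7$ on $t$ positions can lower the minimum distance to $7-t$, so for $t=3$ you only get $A_1=A_2=A_3=0$, while $A_4\left(\left(\C_T\right)^{\perp}\right)$ is in general \emph{nonzero}. Indeed, a weight-$4$ codeword of $\left(\C^{\perp}\right)^T$ must come from a weight-$7$ codeword of $\C^{\perp}$ whose support contains $T$; since $\C^{\perp}$ is described by the parity checks $\sum_{x\in S}x=\sum_{x\in S}x^3=\sum_{x\in S}x^5=0$ on supports $S$, such codewords correspond bijectively to $4$-sets $\{x,y,z,u\}$ disjoint from $T$ solving the system of Lemma \ref{A4N0} with $(a,b,c)$ determined by $T$. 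Hence $A_4\left(\left(\C_T\right)^{\perp}\right)=N(a)$, which is exactly how $N(a)$ enters the Pless power moments; the paper's examples have $N(a)=1,7,31$, so your claimed vanishing fails concretely.

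Your proposal is also internally inconsistent on this point: if $A_4$ of the dual really were $0$, the first five power moments would determine a weight distribution that does not involve $N$ at all, contradicting Table \ref{tab-t3}, yet you simultaneously say the multiplicities are ``combined with the solution count $N(a)$'' without ever saying where $N(a)$ enters the equations. The missing idea is precisely the identification $A_4\left(\left(\C_T\right)^{\perp}\right)=\lambda_{T,7}(\C^{\perp})=N(a)$. Everything else in your outline (length and dimension via Lemma \ref{lem:C-S-P}, the five possible nonzero weights inherited from Table \ref{tab-31}, and solving the first five moment equations for the five multiplicities) matches the paper's proof; once you replace the false vanishing of $A_4$ by this identification, the argument goes through.
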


\begin{table}[ht]
\begin{center}
\caption{The weight distribution of $\C_T$ for $t=3$.} \label{tab-t3}
\begin{tabular}{|c|c|}
\hline
% after \\: \hline or \cline{col1-col2} \cline{col3-col4} ...
weight & multiplicity \\[2mm]
\hline
$0$ & $1$
\\[2mm]
\hline
$2^{m-1}-2^{(m+1)/2}$ & $ \frac{2^{( m-17)/2} (64 + 19 \cdot 2^{(3 (1 + m))/2} + 2^{5 + m} + 2^{
    1/2 (1 + 5 m)} + 4^{2 + m} + 2^{(7 + m)/2} (-7 + 3 N))}{3}$
\\[2mm]
\hline
$2^{m-1}-2^{(m-1)/2}$ & $ \frac{2^{(m-13)/2} (-128 + 17 \cdot 2^{(3 (1 + m))/2} - 2^{4 + m} +
   5 \cdot 2^{3 + 2 m} + 5 \cdot  2^{(1 + 5 m)/2} + 2^{(7 + m)/2} (1 - 3 N))}{3}$
\\[2mm]
\hline
$2^{m-1}$ &   $ -1 + 9 \cdot 2^{-7 + 3 m} - 29 \cdot 4^{m-3} + 2^{m-4} (1 + 3 N) $

\\[2mm]
\hline
$2^{m-1}+2^{(m-1)/2}$ & $ \frac{2^{(m-13)/2} (128 + 17 \cdot 2^{(3 (1 + m))/2} +2^{4 + m} -
   5 \cdot 2^{3 + 2 m} + 5 \cdot  2^{(1 + 5 m)/2} + 2^{(7 + m)/2} (1 - 3 N))}{3}$
\\[2mm]
\hline
$2^{m-1}+2^{(m+1)/2}$ &  $ \frac{2^{( m-17)/2} (-64 + 19 \cdot 2^{(3 (1 + m))/2} - 2^{5 + m} + 2^{
    (1 + 5 m)/2} - 4^{2 + m} + 2^{(7 + m)/2} (-7 + 3 N))}{3}$
\\[2mm]
\hline
\end{tabular}
\end{center}
\end{table}

\begin{proof}
Since the minimum  distance of  $\C^\bot$ is 7, the minimum  distance of $(\mathcal C_{T})^{\perp}$ is at least $4$. This means that
\begin{align}\label{eq-AA1}
& A_1\left ( \left (\mathcal C_{T} \right )^{\perp}  \right )= A_2\left ( \left (\mathcal C_{T} \right )^{\perp}  \right )=A_3 \left ( \left (\mathcal C_{T} \right )^{\perp}  \right )=0.
\end{align}
Moreover, from definitions we have
\begin{align}\label{eq-AA2}
& A_4\left ( \left (\mathcal C_{T} \right )^{\perp}  \right )=N(a),
\end{align}
where $N(a)$ was given in Lemma \ref{A4N0}. Then the desired conclusions follow from definitions, (\ref{eq-AA1}), (\ref{eq-AA2}), Table \ref{tab-31} and the first five Pless power moments of (\ref{eq:PPM}). This completes the proof.
\end{proof}

\begin{example}\label{exa-310}
Let $m=5$, $w$ be a primitive element of $\gf(2^5)$ with minimal polynomial $w^5+w^2+1=0$  and $T=\{w,~w^2,~w^3\}$. Then $a=w+w^2+w^3=w^{12} \neq 0$, $b=w^3+w^6+w^9=w^{21}$, $c=w^5+w^{10}+w^{15}=w^{27}$,
$$\frac{c}{a^5}+\frac{b^2}{a^6}+\frac{b}{a^3}=w^4 \neq 1$$
and $N(a)=1$.
The shortened code $\C_{T}$ in Theorem \ref{main-t33} is a $[ 28, 12, 8]$ binary code with the weight enumerator
$$1+ 183 z^{8}+ 1872 z^{12}+ 1847 z^{16}+ 192z^{20}+  z^{24}$$
and its dual has parameters $[28,16,4]$. The code $\C_{T}$ is optimal according to the tables of best known codes maintained at http: //www.codetables.de.
\end{example}

\begin{example}\label{exa-311}
Let $m=5$, $w$ be a primitive element of $\gf(2^5)$ with minimal polynomial $w^5+w^2+1=0$  and $T=\{w,~w^3,~w^6\}$. Then $a=w+w^3+w^6=0$, $b=w^3+w^9+w^{18}=w^{10}\neq 0$, $c=w^5+w^{15}+w^{30}=w^{3}\neq 0$ and
$N(a)=7$.
The shortened code $\C_{T}$ in Theorem \ref{main-t33} is a $[ 28, 12, 8]$ binary code with the weight enumerator
$$1+ 189 z^{8}+ 1848 z^{12}+ 1883 z^{16}+ 168z^{20}+  7 z^{24}$$
and its dual has parameters $[28,16,4]$. The code $\C_{T}$ is optimal according to the tables of best known codes maintained at http: //www.codetables.de.
\end{example}

\begin{example}\label{exa-312}
Let $m=7$, $w$ be a primitive element of $\gf(2^7)$ with minimal polynomial $w^7+w+1=0$  and $T=\{w,~w^{20},~w^{30}\}$. Then $a=w+w^{20}+w^{30}=0$
and $N(a)=31$.
The shortened code $\C_{T}$ in Theorem \ref{main-t33} is a $[ 124, 18, 48]$ binary code with the weight enumerator
$$1+ 6430 z^{48}+ 84240 z^{56}+ 140783 z^{64}+ 29808 z^{72}+  882 z^{80}$$
and its dual has parameters $[124,106,4]$. The code $\C_{T}$ is optimal according to the tables of best known codes maintained at http: //www.codetables.de.
\end{example}

To prove Theorem \ref{main-t34}, we need the results in the following two lemmas.The former was documented in \cite{HK2010} and the latter is given in Lemma \ref{lem-solu47}

\begin{lemma}\label{lem-degree3}\cite{HK2010}
Let $\delta_1, \delta_2, \delta_3 \in \gf(q)$ with $\delta_1 ^2 \neq \delta_2$ and $\delta_3 \neq \delta_1\delta_2$. Denote $\delta=(\delta_2+\delta_1^2)^3/(\delta_3+\delta_1\delta_2)^2$ and the cubic
equation
\begin{eqnarray}\label{eq-degree3}
x^3+\delta_1 x^2 +\delta_2 x+\delta_3=0.
\end{eqnarray}
Then the following results hold.
\begin{itemize}
  \item $\tr(\delta+1)=1$ if and only if  Equation (\ref{eq-degree3}) has  a unique solution $x \in \gf(q)$.
  \item If $\tr(\delta+1)=0$, then Equation (\ref{eq-degree3}) has zero solution or three distinct solutions in $\gf(q)$.
\end{itemize}
\end{lemma}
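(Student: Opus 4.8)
The plan is to reduce the general cubic to a depressed one and then translate the root count into a trace (Artin--Schreier) condition on an auxiliary quadratic. First I would apply the substitution $x = y + \delta_1$ to (\ref{eq-degree3}); since $\gf(q)$ has characteristic $2$, this kills the quadratic term and yields
\[
y^3 + (\delta_2 + \delta_1^2)\, y + (\delta_3 + \delta_1 \delta_2) = 0 .
\]
Writing $p = \delta_2 + \delta_1^2$ and $r = \delta_3 + \delta_1\delta_2$, the hypotheses $\delta_1^2 \neq \delta_2$ and $\delta_3 \neq \delta_1\delta_2$ say precisely $p \neq 0$ and $r \neq 0$, and since $x \mapsto y = x + \delta_1$ is a bijection of $\gf(q)$ the two equations have the same number of solutions there. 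Computing the formal derivative $y^2 + p$ and checking that its only root $y = \sqrt p$ leaves in the cubic the value $r \neq 0$, I conclude the cubic is separable, so the number of solutions is $0$, $1$, or $3$.

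Next, because squaring is an automorphism of $\gf(q)$ every element has a unique square root, so $y = \sqrt{p}\, z$ normalises the cubic to $z^3 + z + \kappa = 0$ with $\kappa = r\, p^{-3/2} \neq 0$, and a direct computation gives $\kappa^2 = r^2/p^3 = 1/\delta$, i.e. $\delta = \kappa^{-2}$. Working in the splitting field and denoting the three distinct roots by $z_1, z_2, z_3$, I would introduce the two resolvents $w = z_1 z_2^2 + z_2 z_3^2 + z_3 z_1^2$ and $w' = z_1^2 z_2 + z_2^2 z_3 + z_3^2 z_1$. Both $w + w'$ and $w w'$ are symmetric in the roots, hence lie in $\gf(q)$; evaluating them by Newton's identities together with $z_1 + z_2 + z_3 = 0$, $z_1 z_2 + z_1 z_3 + z_2 z_3 = 1$, and $z_1 z_2 z_3 = \kappa$ yields $w + w' = \kappa$ and $w w' = 1 + \kappa^2$. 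Thus $w, w'$ are the roots of $\theta^2 + \kappa\, \theta + (1 + \kappa^2) = 0$ over $\gf(q)$, which under $\theta = \kappa\phi$ becomes the Artin--Schreier equation $\phi^2 + \phi + (1+\kappa^2)/\kappa^2 = 0$. Hence this quadratic splits over $\gf(q)$ if and only if $\tr\!\big((1+\kappa^2)/\kappa^2\big) = \tr(1 + \delta) = 0$.

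The last step, and the one requiring the most care, is to turn the splitting of the resolvent quadratic into the exact root count, because in characteristic $2$ the classical ``discriminant is a square'' criterion degenerates and must be replaced by the behaviour of the pair $\{w, w'\}$ under the Galois (Frobenius) action. A power of Frobenius permutes $\{z_1, z_2, z_3\}$: a $3$-cycle or the identity fixes each of $w, w'$, whereas a transposition swaps them. Consequently the resolvent quadratic splits over $\gf(q)$ exactly when the Galois group of the cubic lies in the alternating group $A_3$, and is irreducible exactly when that group contains a transposition. Here I would invoke the decisive fact that the Galois group of any polynomial over a finite field is cyclic, so the only possibilities are the trivial group, $C_3 \cong A_3$, and $C_2$ generated by a transposition; the group $S_3$ cannot occur. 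Therefore $\tr(\delta+1) = 0$ forces the group into $\{\,\{e\},\, C_3\}$, giving $3$ roots or $0$ roots, while $\tr(\delta + 1) = 1$ forces it to be $C_2$, whose single fixed root gives exactly one solution in $\gf(q)$. This yields both assertions, and the cyclicity of the finite-field Galois group is precisely what excludes the ambiguous $S_3$ case and pins down the ``unique solution'' statement.
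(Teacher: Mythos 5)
Your proof is correct, but note that there is nothing internal to compare it against: the paper states this lemma as a quoted result from \cite{HK2010} and gives no proof of its own (the statement is the classical Berlekamp--Rumsey--Solomon criterion for cubics over $\gf(2^m)$). Your argument is essentially the standard proof of that criterion, and every step checks out: depressing the cubic via $x = y + \delta_1$ gives $y^3 + py + r$ with $p = \delta_2 + \delta_1^2 \neq 0$, $r = \delta_3 + \delta_1\delta_2 \neq 0$; the derivative test at $y = \sqrt{p}$ leaves the value $r \neq 0$, so the cubic is separable; the normalisation $y = \sqrt{p}\,z$ yields $z^3 + z + \kappa = 0$ with $\kappa^2 = 1/\delta$; and your resolvent computations are right, since with $e_1 = 0$, $e_2 = 1$, $e_3 = \kappa$ the identities $w + w' = e_1e_2 - 3e_3$ and $ww' = e_2^3 - 6e_1e_2e_3 + e_1^3e_3 + 9e_3^2$ reduce in characteristic $2$ to $w + w' = \kappa$ and $ww' = 1 + \kappa^2$, so the resolvent splits over $\gf(q)$ exactly when $\tr(\delta + 1) = 0$ by Artin--Schreier theory. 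The Galois-theoretic finish (Frobenius generates the group, $3$-cycles and the identity fix $w, w'$, a transposition swaps them, and $S_3$ is excluded by cyclicity) correctly pins down the correspondence: one rational root iff the generator is a transposition iff the resolvent is irreducible. One small point you should make explicit: the final dichotomy needs $w \neq w'$, since otherwise ``Frobenius swaps $w, w'$'' would not force the resolvent to be irreducible over $\gf(q)$. This is immediate from $w + w' = \kappa \neq 0$ (equivalently, the Artin--Schreier polynomial $\phi^2 + \phi + (\delta + 1)$ is separable), but it is the hinge on which the transposition case yields exactly one solution rather than an ambiguity.
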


\begin{lemma}\label{lem-solu47}
Let $m\geq 5$ be odd, $q=2^m$ and $ \{ x_1, x_2, x_3, x_4\}$ be a $4$-subset of  $ \gf(q)^* $. Denote $S_i=x_1^i+x_2^i+x_3^i+x_4^i$.
Let $\overline{N}$ be the number of $\{x,y,z\}\subseteq \gf(q)^*$ satisfying the system of equations

\begin{eqnarray}\label{eqx3}
\left\{
   \begin{array}{cccl}
    & x~+y~+z~=S_1  \\ [2mm]
    & x^3+y^3+z^3=S_3  \\ [2mm]
    & x^{5} +y^{5}+z^{5}=S_{5}
   \end{array}
\right.
\end{eqnarray}
with $\# \{x_1, x_2, x_3, x_4, x,y,z\}=7.$
Then $S_3+S_1^3 \neq 0$ and $S_5+S_1^5 \neq 0$. Furthermore, $\overline{N}=0$ if $\tr\left (\frac{(S_5+S_1 ^5)^3}{(S_3+S_1^3)^5}+1 \right )=1$, $\overline{N}=0$ or $1$ otherwise.
\end{lemma}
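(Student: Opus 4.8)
The plan is to follow the reduction used in the proof of Lemma~\ref{A4gel}. Put $X=x+S_1$, $Y=y+S_1$, $Z=z+S_1$. Then $X+Y+Z=(x+y+z)+3S_1=0$, and since $\sum X=0$ forces $\sum X^2=\sum X^4=0$, expanding $(X+S_1)^3$ and $(X+S_1)^5$ kills all cross terms and the system for $\{x,y,z\}$ collapses to
\begin{eqnarray*}
X+Y+Z=0,\qquad X^3+Y^3+Z^3=S_3+S_1^3,\qquad X^5+Y^5+Z^5=S_5+S_1^5 .
\end{eqnarray*}
Writing $b'=S_3+S_1^3$ and $c'=S_5+S_1^5$, both the non-vanishing assertions and the evaluation of $\overline N$ now concern this depressed, sum-zero system, and everything can be phrased through $b'$ and $c'$.

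The first task is $b'\neq 0$ and $c'\neq 0$. Let $\sigma_1,\dots,\sigma_4$ be the elementary symmetric functions of $x_1,x_2,x_3,x_4$; Newton's identities in characteristic $2$ give $b'=\sigma_1\sigma_2+\sigma_3$, and when $\sigma_1=S_1=0$ also $c'=\sigma_2\sigma_3$. In that degenerate case $b'=\sigma_3=(x_1+x_4)(x_2+x_4)(x_3+x_4)\neq0$ because the points are distinct, so it remains to see $\sigma_2\neq0$. Here I would use an Artin--Schreier pairing: split the points as $\{x_1,x_2\}\cup\{x_3,x_4\}$, set $s_1=x_1+x_2$, $s_2=x_3+x_4$ (both nonzero), and note $\sigma_1=0$ forces $s_1=s_2=:s$. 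Since $x_1,x_2$ are the roots of $t^2+st+x_1x_2$, their lying in $\gf(q)$ forces $\tr(x_1x_2/s^2)=0$, and likewise $\tr(x_3x_4/s^2)=0$. If moreover $\sigma_2=0$, then $x_1x_2+x_3x_4=s^2$, and adding the two trace conditions gives $\tr(1)=0$, contradicting $\tr(1)=1$ for $m$ odd; hence $\sigma_2\neq0$ and $c'=\sigma_2\sigma_3\neq0$. For the general case $S_1\neq0$ I would reduce $b'=0$ (respectively $c'=0$) to the solvability of a single Artin--Schreier / low-degree equation in the remaining point and try to contradict it, the leverage again being that $\gf(2^m)$ with $m$ odd has no primitive cube root of unity (so the anisotropic form $u^2+uv+v^2$ never vanishes) together with $\tr(1)=1$.

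Granting $b'\neq0$ and $c'\neq0$, the counting is the robust part. As in Lemma~\ref{A4gel}, any solution $\{X,Y,Z\}$ has $\sigma_1=0$, hence $\sigma_3=b'$ and $\sigma_2=c'/b'$, so $\{X,Y,Z\}$ must be the root set of the single cubic $\lambda^3+(c'/b')\lambda+b'=0$; this cubic is separable (its repeated-root candidate is not a root when $b'\neq0$), so three roots means three distinct roots. Thus $\{x,y,z\}=\{X+S_1,Y+S_1,Z+S_1\}$ is uniquely determined, giving $\overline N\leq1$. Now apply Lemma~\ref{lem-degree3} with $\delta_1=0$, $\delta_2=c'/b'$, $\delta_3=b'$ (both hypotheses $\delta_1^2\neq\delta_2$ and $\delta_3\neq\delta_1\delta_2$ hold because $b',c'\neq0$): its invariant is $\delta=(\delta_2+\delta_1^2)^3/(\delta_3+\delta_1\delta_2)^2=(c'/b')^3/b'^2=c'^3/b'^5=(S_5+S_1^5)^3/(S_3+S_1^3)^5$, exactly the argument of the trace in the statement. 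If $\tr(\delta+1)=1$ the cubic has a unique root in $\gf(q)$, no triple $\{X,Y,Z\}$ exists, so $\overline N=0$; if $\tr(\delta+1)=0$ it has zero or three roots, and after discarding the cases where the resulting $x,y,z$ fail to be nonzero or to be distinct from $x_1,x_2,x_3,x_4$ (the constraint $\#\{x_1,\dots,x_4,x,y,z\}=7$) one gets $\overline N\in\{0,1\}$.

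The main obstacle is the unconditional non-vanishing of $b'$ and $c'$ for arbitrary $S_1$. The $S_1=0$ case is clean via the pairing argument above, but the general statement is genuinely delicate, and this is the step I would scrutinise most, since the cubic-plus-Lemma~\ref{lem-degree3} machinery is vacuous without it; I would first stress-test the claim on $\gf(2^5)$ before committing to an unconditional proof.
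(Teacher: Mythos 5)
Your counting core is essentially the paper's own argument, carried out a bit more carefully. The paper makes the same shift by $S_1$, reduces to the sum-zero system, and reaches the same cubic --- it gets there by substituting $z=x+y$ and then $y\mapsto xy$ to obtain (\ref{eqx8}), which is exactly your $\lambda^3+(c'/b')\lambda+b'=0$ with $b'=S_3+S_1^3$, $c'=S_5+S_1^5$ --- and then applies Lemma~\ref{lem-degree3} with the same invariant $\delta=c'^3/b'^5$. Your additional checks (separability of the cubic when $b'\neq 0$; discarding root triples that violate nonzeroness or $\#\{x_1,\dots,x_4,x,y,z\}=7$, which is why the answer is ``$0$ or $1$'' rather than ``$1$'') are correct, and are in fact treated more loosely in the paper than in your proposal.

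The step you refused to commit to is precisely where the paper goes wrong, and your instinct to stress-test on $\gf(2^5)$ is vindicated: the unconditional claims $b'\neq 0$ and $c'\neq 0$ are \emph{false}, so this gap in your proposal cannot be filled. The paper justifies them by citing that the codes with zero sets $\{1,2^k+1\}$, $\gcd(k,m)=1$, have minimum distance five; but distance five only forbids the relations $\sum y_i=\sum y_i^{2^k+1}=0$ among at most \emph{four} distinct nonzero elements, whereas $b'=0$ is such a relation among the \emph{five} elements $x_1,x_2,x_3,x_4,S_1$. Indeed, minimum distance \emph{exactly} five means weight-five codewords exist, and every one of them is a counterexample: if $\{y_1,\dots,y_5\}$ supports a weight-five codeword of the code with zeros $\{1,3\}$ and you take $x_i=y_i$ for $i\le 4$, then $S_1=y_5\neq 0$ and $S_3=y_5^3$, so $S_3+S_1^3=0$. (Your trace-pairing proof for $S_1=0$ is correct; the failure lives only in the case $S_1\neq 0$.) What is true, and suffices to repair the lemma, is the weaker statement that $b'$ and $c'$ cannot vanish \emph{simultaneously}: otherwise $\{x_1,\dots,x_4,S_1\}$, or a proper subset of it when $S_1=0$ or $S_1\in\{x_1,\dots,x_4\}$, would support a codeword of weight at most five in the code with zeros $\{1,3,5\}$, contradicting minimum distance seven (Lemma~\ref{lem-cf}). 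In the degenerate cases the count is still zero: if $b'=0$, then any sum-zero solution satisfies $XYZ=X^3+Y^3+Z^3=b'=0$, forcing two of $X,Y,Z$ to coincide; if $c'=0\neq b'$, then $\sigma_2=c'/b'=0$ and the cubic degenerates to $\lambda^3=b'$, which has a unique root since $\gcd(3,2^m-1)=1$ for $m$ odd. Hence $\overline{N}=0$ whenever $b'c'=0$, and the dichotomy via Lemma~\ref{lem-degree3} applies verbatim when $b'c'\neq 0$; with the lemma restated in that form, your argument (and the paper's) goes through.
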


\begin{proof}
Substituting
\begin{eqnarray}\label{eqx0}
x= x + S_1,~y= y + S_1~,z= z + S_1
\end{eqnarray}
into the system (\ref{eqx3}) leads to
\begin{eqnarray}\label{eqx4}
\left\{
   \begin{array}{cccl}
    & x~+y~+z~=0  \\ [2mm]
    & x^3+y^3+z^3=S_3+S_1^3  \\ [2mm]
    & x^{5} +y^{5}+z^{5}=S_5+S_1^5
   \end{array}
\right.
\end{eqnarray}
This means that $S_3+S_1^3 \neq 0$ and $S_5+S_1^5 \neq 0$, since the codes with zero sets $\{1, 2^k + 1\}$ where $\gcd(k, m) = 1$ have minimum distance five.
Further, substituting $z= x + y$ into the last two equations of (\ref{eqx4}) leads to
\begin{eqnarray}\label{eqx5}
\left\{
   \begin{array}{ccl}
    & x^2 y+y^2 x=S_3+S_1^3  \\ [2mm]
    & x^4 y+ +y^4 x=S_5+S_1^5
   \end{array}
\right.
\end{eqnarray}
and replacing $y$ with $xy$ into (\ref{eqx5}) leads to
\begin{eqnarray}\label{eqx6}
\left\{
   \begin{array}{ccl}
    & y+y^2=x^{-3}(S_3+S_1^3)  \\ [2mm]
    & y+y^4 =x^{-5}(S_5+S_1^5)
   \end{array}
\right.
\end{eqnarray}
Note that
\begin{eqnarray}\label{eqx7}
(y+y^2)+(y+y^2)^2=y+y^4.
\end{eqnarray}
Applying (\ref{eqx6}) to (\ref{eqx7}), we have
\begin{eqnarray}\label{eqx8}
x^3+(S_3+S_1^3)^{-1}(S_5+S_1^5)x+(S_3+S_1^3)=0.
\end{eqnarray}
Next we distinguish two cases as follows:
\begin{itemize}
  \item When $\tr\left (\frac{(S_5+S_1 ^5)^3}{(S_3+S_1^3)^5}+1 \right )=1$, from Lemma \ref{lem-degree3} we have (\ref{eqx8}) has a unique solution $\xi \in \gf(q)^*$.
Since $x,y,z$ of (\ref{eqx3}) have symmetrical property, from (\ref{eqx0}) we have $x+S_1=y+S_1=z+S_1=\xi $ which means that
 $(\xi+S_1,\xi+S_1,\xi+S_1)$ is the unique solution of (\ref{eqx3}). Thus, $\overline{N}=0$.
  \item When $\tr\left (\frac{(S_5+S_1 ^5)^3}{(S_3+S_1^3)^5}+1 \right )=0$, from Lemma \ref{lem-degree3} we have (\ref{eqx8}) has zero solution or three distinct solutions. If (\ref{eqx8}) has zero solution, then $\overline{N}=0$. If (\ref{eqx8}) has three distinct solutions, i.e., $\xi_1$, $\xi_2$, and $\xi_3$,  then there exists a unique set $\{x,y,z\}=\{\xi_1+S_1,\xi_2+S_1,,\xi_3+S_1\}$ satisfying (\ref{eqx3}) and thus $\overline{N}=1$.
\end{itemize}
This completes the proof.
\end{proof}

\begin{theorem}\label{main-t34}
Let $m\geq 5$ be odd, $q=2^m$ and $\C=\C^{(1)}$ be defined by (\ref{ce}). Let $T=\{x_1,x_2,x_3,x_4\} \subseteq  \gf(q)^*$. Denote $a_i=(\sum_{j=1}^4 x_j) -x_i$ and $S_i=x_1^i+x_2^i+x_3^i+x_4^i$.
%Suppose that $\tr\left (\frac{(S_5+S_1 ^5)^3}{(S_3+S_1^3)^5}+1 \right )=1$,
Then the shortened code $\C_{T}$ is a $[2^{m}-5, 3m-4, 2^{m-1}-2^{(m+1)/2}]$ binary code with the weight distribution in Table \ref{tab-t41}, where
$N=\sum_{i=1}^4 N(a_i) -4\cdot \overline{N} +N(S_1)$, $\overline{N}$ was given in Lemma \ref{lem-solu47} and $N(\cdot)$ was defined in Lemma \ref{A4N0}.
\end{theorem}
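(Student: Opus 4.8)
The plan is to follow the template already used for Theorems \ref{main-design1} and \ref{main-t33}: pin down the few low-weight terms of the dual of $\C_T$ and then recover the whole weight distribution from the Pless power moments. Since $e=1$ gives $\gcd(m,e)=1$, Lemma \ref{lem-cf} tells us $\C^\perp$ has minimum distance $7$, while $\C$ has dimension $3m$ and minimum distance $2^{m-1}-2^{(m+1)/2}$; hence $t=4<\min\{d,d^\perp\}=7$ and Lemma \ref{lem:C-S-P} yields that $\C_T$ has length $2^m-5$, dimension $3m-4$, and $(\C_T)^\perp=(\C^\perp)^T$. Because shortening only deletes coordinates on which a codeword already vanishes, every nonzero weight of $\C_T$ is one of the five weights of Table \ref{tab-31}, so the distribution is determined by five linear equations; I would use the moments (\ref{eq:PPM}) for $t=0,1,2,3,4$, whose right-hand sides involve only $A_0,\dots,A_4$ of $(\C_T)^\perp$. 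Thus the task reduces to computing $A_3$ and $A_4$ of $(\C_T)^\perp$, the vanishing $A_1=A_2=0$ being immediate since puncturing on $4$ coordinates drops the dual distance $7$ by at most $4$.

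To identify these numbers, regard a codeword of $\C^\perp$ as a subset $W\subseteq\gf(q)^*$ with $\sum_{x\in W}x=\sum_{x\in W}x^3=\sum_{x\in W}x^5=0$. Puncturing is injective on $\C^\perp$ (a codeword supported inside $T$ would have weight $\le 4<7$), so $A_j((\C^\perp)^T)$ equals the number of codewords $W$ with $|W\setminus T|=j$. As $|W|=j+|W\cap T|$ and $|W|\ge 7$, the case $j=3$ forces $|W|=7$ and $T\subseteq W$; such $W$ are exactly $T\cup\{x,y,z\}$ with $\{x,y,z\}$ solving system (\ref{eqx3}) of Lemma \ref{lem-solu47}, so $A_3((\C_T)^\perp)=\overline{N}$.

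The crux is $A_4((\C_T)^\perp)$. Now $|W|=4+|W\cap T|\ge 7$ leaves two possibilities: a weight-$7$ codeword meeting $T$ in exactly three points, or a weight-$8$ codeword containing $T$. For the first, $N(a_i)$ of Lemma \ref{A4N0}, applied to the $3$-subset $T\setminus\{x_i\}$ (so $a_i=S_1-x_i$), counts the weight-$7$ codewords containing $T\setminus\{x_i\}$; a weight-$7$ codeword containing all of $T$ is counted once for each of the four indices $i$, whereas one meeting $T$ in exactly three points is counted exactly once, so the number of the latter is $\sum_{i=1}^4 N(a_i)-4\overline{N}$. For the second, a weight-$8$ codeword containing $T$ is $T\cup D$ with $D\subseteq\gf(q)^*$ a $4$-subset disjoint from $T$ sharing the power sums $(S_1,S_3,S_5)$ of $T$; here I would invoke Lemma \ref{lem-solution}, using the minimum-distance-$7$ property to exclude every partial overlap of $D$ with $T$, so that the total solution count $N_{(S_1,S_3,S_5)}$ splits off only the trivial solution $D=T$ and the solutions containing $0$ (the latter being in bijection with the weight-$7$ codewords containing $T$, hence $\overline{N}$ of them), leaving exactly the quantity $N(S_1)$. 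Adding the two contributions gives $A_4((\C_T)^\perp)=\sum_{i=1}^4 N(a_i)-4\overline{N}+N(S_1)=N$.

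Finally, substituting $A_0=1$, $A_1=A_2=0$, $A_3=\overline{N}$, $A_4=N$ together with the data of Table \ref{tab-31} into the five moment equations and solving the resulting (Vandermonde-type, hence invertible) linear system produces the multiplicities of Table \ref{tab-t41}; positivity of the entry at weight $2^{m-1}-2^{(m+1)/2}$ confirms that this is the minimum distance. I expect the genuine difficulty to lie entirely in the third paragraph: correctly disentangling the weight-$7$ and weight-$8$ contributions and, in particular, handling the degenerate solutions (the trivial $D=T$ and those involving the point $0$) so that Lemmas \ref{A4N0}, \ref{lem-solu47} and \ref{lem-solution} assemble into precisely $N$ with neither over- nor under-counting.
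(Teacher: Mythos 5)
Your proposal is correct and structurally identical to the paper's proof: Lemma \ref{lem:C-S-P} gives the length, the dimension and $(\C_T)^{\perp}=(\C^{\perp})^T$; then $A_1=A_2=0$, $A_3=\lambda_{T,7}(\C^{\perp})=\overline{N}$, and $A_4$ is split into weight-$7$ codewords meeting $T$ in exactly three points (your count $\sum_{i=1}^{4}N(a_i)-4\overline{N}$ is precisely the paper's Equation (\ref{eq-4A4})) plus weight-$8$ codewords containing $T$; the table then follows from the first five power moments (\ref{eq:PPM}).

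The one place you diverge is the step you yourself flagged as the crux, and there your treatment is actually more careful than the paper's. The paper's Equation (\ref{eq-84}) simply asserts $\lambda_{T,8}(\C^{\perp})=N(S_1)$ ``from definitions and Lemma \ref{A4N0}'', with no accounting for degenerate solutions. Your decomposition is the correct statement: $N_{(S_1,S_3,S_5)}=1+\overline{N}+\lambda_{T,8}(\C^{\perp})$, where the $1$ is the trivial solution $D=T$ and each $0$-containing solution strips to a $3$-set forming with $T$ a weight-$7$ support, hence there are exactly $\overline{N}$ of those. This care is genuinely needed: inside Lemma \ref{A4N0} a $0$-containing solution would force a weight-$6$ codeword and therefore cannot occur, but in the present situation $0$-containing solutions exist exactly when $\overline{N}=1$. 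Consequently, if one reads $N(S_1)$ literally through the formula of Lemma \ref{A4N0} (i.e. $N_{(S_1,S_3,S_5)}-1$ when $S_1\neq 0$), then $\lambda_{T,8}(\C^{\perp})=N(S_1)-\overline{N}$ and the theorem's count would have to be $N=\sum_{i=1}^{4}N(a_i)-5\overline{N}+N(S_1)$; the stated formula is correct only under your reading, namely $N(S_1)=N_{(S_1,S_3,S_5)}-1-\overline{N}=\lambda_{T,8}(\C^{\perp})$. Your reading is the one consistent with the paper's own data: in Example \ref{exa-43} one has $\overline{N}=1$, and the displayed weight enumerator forces, via the power moments, $A_4((\C_T)^{\perp})=6=N$, whereas the literal Lemma-\ref{A4N0} value would yield $7$. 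So: same route, correct proof, and your handling of the degenerate solutions repairs the one imprecise step in the paper's own argument.
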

\begin{table}[ht]
\begin{center}
\caption{The weight distribution of $\C_T$ for $t=4$.} \label{tab-t41}
\begin{tabular}{|c|c|}
\hline
% after \\: \hline or \cline{col1-col2} \cline{col3-col4} ...
weight & multiplicity \\[2mm]
\hline
$0$ & $1$
\\[2mm]
\hline
$2^{m-1}-2^{\frac{m+1}{2}}$ & $ \frac{2^{( m-19)/2} (128+31 \cdot 2^{3(m+1)/2}+2^{(5m+1)/2}+5\cdot 4^{m+1}+2^{m+3}\cdot(17+3\overline{N})+2^{(m+7)/2} (15\overline{N}+3N-1)    )}{3} $

\\[2mm]
\hline
$2^{m-1}-2^{\frac{m-1}{2}}$ & $ \frac{2^{\frac{m-15}{2}} \cdot (-256 +29  \cdot 2^{3(m+1)/2} +25 \cdot 2^{1+2m}+5\cdot 2^{(5m+1)/2}-2^{m+2}(17+3\overline{N}) -2^{(m+7)/2} (11+15 \overline{N}+3N)       )}{3}$
\\[2mm]
\hline

$2^{m-1}$ & $-1+9\cdot 2^{3m-8}-49 \cdot 2^{2m-7}+3 \cdot 2^{m-5}\cdot (5+5\overline{N}+N)$

\\[2mm]
\hline
$2^{m-1}+2^{\frac{m-1}{2}}$ & $ \frac{2^{\frac{m-15}{2}} \cdot (256 +29  \cdot 2^{3(m+1)/2} -25 \cdot 2^{1+2m}+5\cdot 2^{(5m+1)/2}+2^{m+2}(17+3\overline{N}) -2^{(m+7)/2} (11+15 \overline{N}+3N)       )}{3}$
\\[2mm]
\hline
$2^{m-1}+2^{\frac{m+1}{2}}$ &  $ \frac{2^{( m-19)/2} (-128+31 \cdot 2^{3(m+1)/2}+2^{(5m+1)/2}-5\cdot 4^{m+1}-2^{m+3}\cdot(17+3\overline{N})+2^{(m+7)/2} (15\overline{N}+3N-1)    )}{3} $
\\[2mm]
\hline
\end{tabular}
\end{center}
\end{table}

\begin{proof}
By $\#T=4$ and the minimum distance of  $C^{\perp}$ is 7, then the minimum distance of  $\left (\mathcal C^{\perp} \right )^{T}$ is at least 3. This means that
\begin{align}\label{eq-4A12}
A_1\left ( \left (\mathcal C^{\perp} \right )^{T}  \right )=A_2\left ( \left (\mathcal C^{\perp} \right )^{T}  \right )=0.~
%A_1\left ( \left (\mathcal C_{T} \right )^{\perp}  \right )=A_2\left ( \left (\C_{T} \right )^{\perp}  \right )=0
\end{align}
Further, from the definition of $T$ we have
\begin{align}\label{eq:A3}
\lambda _{T,7}(\C^\bot)=\overline{N}
\end{align}
where $\overline{N}$ was given in Lemma \ref{lem-solu47}.
Therefore,
\begin{align}\label{eq-4A3}
A_3\left ( \left (\mathcal C^{\perp} \right )^{T}  \right )= \overline{N}.
\end{align}
From definitions and Lemma \ref{A4N0} we get
\begin{align}\label{eq-84}
\lambda _{T,8}(\C^\bot)=N(S_1),
\end{align}
where $N(S_1)$ was defined in Lemma \ref{A4N0}.
Note that there exist only four $3$-subsets of $T$, i.e,
$$T_1=\{x_2,x_3,x_4\}, T_2=\{x_1,x_3,x_4\}, T_3=\{x_1,x_2,x_4\}, T_4=\{x_1,x_2,x_3\}$$
and
\begin{align}\label{eq-00}
\lambda _{T_i,7}(\C^\bot)=N(a_i),
\end{align}
where $N(\cdot)$ was defined in Lemma \ref{A4N0}.
By definitions we have
\begin{align}\label{eq-4A4}
A_4\left ( \left (\mathcal C^{\perp} \right )^{T}  \right )&= \sum_{i=1}^{4}\left( A_4\left ( \left (\mathcal C^{\perp} \right )^{T_i}  \right )- \lambda _{T,7}(\C^\bot)    \right)+
\lambda _{T,8}(\C^\bot)    \nonumber \\
&= \sum_{i=1}^{4}\left(\lambda _{T_i,7}(\C^\bot) - \lambda _{T,7}(\C^\bot)    \right)+ \lambda _{T,8}(\C^\bot)
\end{align}
Combining Equations (\ref{eq:A3}),  (\ref{eq-84}) and (\ref{eq-00}) with Equation (\ref{eq-4A4}) yields
\begin{align}\label{eq-even4A4}
A_4\left ( \left (\mathcal C^{\perp} \right )^{T}  \right )= \sum_{i=1}^4 N(a_i)-4 \cdot \overline{N} +N(S_1),
\end{align}
where $N(\cdot)$ was defined in Lemma \ref{A4N0}.

Note that the shortened code $\C_{T}$ has length $q-1-4=2^m-5$ and dimension $3m-4$ due to $\#T=4$ and Lemma \ref{lem:C-S-P}.
By definition and the weight distribution in Table \ref{tab-31}, we have that
$A_i\left (\C_T \right )=0$ for $i\not \in \{0, i_1, i_2, i_3,i_4,i_5\} $, where $i_1=2^{m-1}-2^{(m+1)/2}$, $i_2=2^{m-1}-2^{(m-1)/2}$
, $i_3=2^{m-1}$, $i_4=2^{m-1}+2^{(m-1)/2}$  and $i_5=2^{m-1}+2^{(m+1)/2}$.
Using Lemma \ref{lem:C-S-P} and Equations (\ref{eq-4A12}), (\ref{eq-4A3}) and (\ref{eq-even4A4}) and applying the first five Pless power moments of (\ref{eq:PPM}) yields the weight distribution in Table \ref{tab-t41}. This completes the proof.
\end{proof}

\begin{corollary}
In Theorem \ref{main-t34}, if $\tr\left (\frac{(S_5+S_1 ^5)^3}{(S_3+S_1^3)^5}+1 \right )=1$, from Lemma \ref{lem-solu47} we have $\overline{N}=0$. Thus, the shortened code $\C_{T}$ is a $[2^{m}-5, 3m-4, 2^{m-1}-2^{(m+1)/2}]$ binary code with the weight distribution in Table \ref{tab-t42}.
\end{corollary}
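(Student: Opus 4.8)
The plan is to obtain this statement as an immediate specialization of Theorem \ref{main-t34}, feeding in the sharper value of $\overline{N}$ that Lemma \ref{lem-solu47} supplies under the extra trace hypothesis. All of the genuine analytic work (the reduction of the defining system (\ref{eqx3}) to the cubic (\ref{eqx8}) and the counting of its roots, together with the computation of the weight distribution via the Pless power moments) has already been carried out in Lemma \ref{lem-solu47} and in the proof of Theorem \ref{main-t34}; what remains for the corollary is essentially a substitution.

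First I would invoke Lemma \ref{lem-solu47}: under the hypothesis $\tr\left(\frac{(S_5+S_1^5)^3}{(S_3+S_1^3)^5}+1\right)=1$, the cubic (\ref{eqx8}) has a unique solution in $\gf(q)^*$. By the symmetry of $x,y,z$ this forces $x=y=z$, so there is no admissible triple $\{x,y,z\}\subseteq\gf(q)^*$ with $\#\{x_1,x_2,x_3,x_4,x,y,z\}=7$; hence $\overline{N}=0$. I would then return to the quantity $N=\sum_{i=1}^4 N(a_i)-4\cdot\overline{N}+N(S_1)$ occurring in Theorem \ref{main-t34} and set $\overline{N}=0$, so that $N$ collapses to $\sum_{i=1}^4 N(a_i)+N(S_1)$. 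Substituting $\overline{N}=0$ into each multiplicity column of Table \ref{tab-t41} deletes every term carrying the factor $\overline{N}$ and leaves precisely the entries recorded in Table \ref{tab-t42}. The length $2^{m}-5$, the dimension $3m-4$, and the minimum distance $2^{m-1}-2^{(m+1)/2}$ are all independent of $\overline{N}$, so they are inherited verbatim from Theorem \ref{main-t34}.

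The only step requiring any care — and the one I would flag as the main, though still routine, obstacle — is checking that after the substitution $\overline{N}=0$ the five nonzero multiplicity expressions of Table \ref{tab-t41} simplify exactly to the closed forms displayed in Table \ref{tab-t42}. This is a direct algebraic verification using the first five Pless power moments in (\ref{eq:PPM}), carried out exactly as in the proof of Theorem \ref{main-t34} but with $N$ replaced throughout by $\sum_{i=1}^4 N(a_i)+N(S_1)$, and it introduces no new difficulty beyond bookkeeping.
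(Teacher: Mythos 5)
Your proposal is correct and follows essentially the same route as the paper: the corollary has no separate proof there either, being exactly the substitution of $\overline{N}=0$ (from Lemma \ref{lem-solu47} under the trace hypothesis, via the uniqueness of the root of the cubic (\ref{eqx8}) and the symmetry argument you reproduce) into Theorem \ref{main-t34}, after which the entries of Table \ref{tab-t41} collapse term by term to those of Table \ref{tab-t42}, with $N=\sum_{i=1}^4 N(a_i)+N(S_1)$ and the length, dimension, and minimum distance unchanged.
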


\begin{table}[ht]
\begin{center}
\caption{The weight distribution of $\C_T$ for $\tr\left (\frac{(S_5+S_1 ^5)^3}{(S_3+S_1^3)^5}+1 \right )=1$.} \label{tab-t42}
\begin{tabular}{|c|c|}
\hline
% after \\: \hline or \cline{col1-col2} \cline{col3-col4} ...
weight & multiplicity \\[2mm]
\hline
$0$ & $1$
\\[2mm]
\hline
$2^{m-1}-2^{(m+1)/2}$ & $ \frac{2^{( m-19)/2} (128-2^{(m+7)/2}+17\cdot 2^{m+3}+31\cdot 2^{(3m+3)/2}+5\cdot 2^{2+2m}+2^{(5m+1)/2}+3 \cdot 2^{(m+7)/2} \cdot N)}{3} $

\\[2mm]
\hline
$2^{m-1}-2^{(m-1)/2}$ & $ \frac{2^{\frac{m-15}{2}} \cdot (-256 -11  \cdot 2^{(m+7)/2} -17\cdot 2^{m+2}+29\cdot 2^{(3m+3)/2}+25 \cdot 2^{1+2m}+5 \cdot 2^{(5m+1)/2}-3 \cdot 2^{(m+7)/2} \cdot N)}{3}$
\\[2mm]
\hline
$2^{m-1}$ & $-1+9\cdot 2^{3m-8}+15 \cdot 2^{m-5}-49 \cdot 2^{2m-7}+3 \cdot 2^{m-5}\cdot N$

\\[2mm]
\hline
$2^{m-1}+2^{(m-1)/2}$ & $ \frac{2^{\frac{m-15}{2}} \cdot (256 -11  \cdot 2^{(m+7)/2} +17\cdot 2^{m+2}+29\cdot 2^{(3m+3)/2}-25 \cdot 2^{1+2m}+5 \cdot 2^{(5m+1)/2}-3 \cdot 2^{(m+7)/2} \cdot N)}{3}$
\\[2mm]
\hline
$2^{m-1}+2^{(m+1)/2}$ &  $ \frac{2^{( m-19)/2} (-128-2^{(m+7)/2}-17\cdot 2^{m+3}+31\cdot 2^{(3m+3)/2}-5\cdot 2^{2+2m}+2^{(5m+1)/2}+3 \cdot 2^{(m+7)/2} \cdot N)}{3} $
\\[2mm]
\hline
\end{tabular}
\end{center}
\end{table}

\begin{example}\label{exa-41}
Let $m=5$, $w$ be a primitive element of $\gf(2^5)$ with minimal polynomial $w^5+w^2+1=0$  and $T=\{w^2,~w^4,~w^5,~w^8\}$. Then $S_1=w^{13} $, $S_3=w^{5}$, $S_5=w^{24}$,
$\tr\left (\frac{(S_5+S_1 ^5)^3}{(S_3+S_1^3)^5}+1 \right )=1$, $\overline{N}=0$
and $N=5$. The shortened code $\C_{T}$ in Theorem \ref{main-t34} is a $[ 27, 11, 8]$ binary code with the weight enumerator
$$1+ 130 z^{8}+ 1072 z^{12}+ 789 z^{16}+ 56 z^{20}$$
and its dual has parameters $[27,16,4]$. The code $\C_{T}$ is optimal according to the tables of best known codes maintained at http: //www.codetables.de.
\end{example}

\begin{example}\label{exa-42}
Let $m=7$, $w$ be a primitive element of $\gf(2^7)$ with minimal polynomial $w^7+w+1=0$  and $T=\{w^2,~w^3,~w^6,~w^7\}$. Then $S_1=w^{37} $, $S_3=w^{67}$, $S_5=w^{26}$,
$\tr\left (\frac{(S_5+S_1 ^5)^3}{(S_3+S_1^3)^5}+1 \right )=1$, $\overline{N}=0$
and $N=27$. The shortened code $\C_{T}$ in Theorem \ref{main-t34} is a $[ 123, 17, 48]$ binary code with the weight enumerator
$$1+ 3878 z^{48}+ 46416 z^{56}+ 67839 z^{64}+ 12656 z^{72}+  282 z^{80}$$
and its dual has parameters $[123,106,4]$. The code $\C_{T}$ is optimal according to the tables of best known codes maintained at http: //www.codetables.de.
\end{example}

\begin{rem}
In Theorem \ref{main-t34}, if $\tr\left (\frac{(S_5+S_1 ^5)^3}{(S_3+S_1^3)^5}+1 \right )=0$, the value of $\lambda _{T,7}(\C^\bot)$ is $0$ or $1$. Note that the value of $\lambda _{T,7}(\C^\bot)$ is related to the number of solutions of (\ref{eqx8}). If (\ref{eqx8}) has zero solution, then $\overline{N}=\lambda _{T,7}(\C^\bot)=0$. If (\ref{eqx8}) has three distinct solutions, then $\overline{N}=\lambda _{T,7}(\C^\bot)=1$.
\end{rem}

\begin{example}\label{exa-43}
Let $m=5$, $w$ be a primitive element of $\gf(2^5)$ with minimal polynomial $w^5+w^2+1=0$  and $T=\{w,~w^3,~w^6,~w^7\}$. Then $S_1=w^{7} $, $S_3=w^{29}$, $S_5=w^{21}$,
$\tr\left (\frac{(S_5+S_1 ^5)^3}{(S_3+S_1^3)^5}+1 \right )=0$, $\overline{N}=1$
and $N=6$. The shortened code $\C_{T}$ in Theorem \ref{main-t34} is a $[ 27, 11, 8]$ binary code with the weight enumerator
$$1+ 135 z^{8}+ 1056 z^{12}+ 807 z^{16}+ 48 z^{20}+ z^{24}$$
and its dual has parameters $[27,16,3]$. The code $\C_{T}$ is optimal according to the tables of best known codes maintained at http: //www.codetables.de.
\end{example}

\begin{example}\label{exa-44}
Let $m=5$, $w$ be a primitive element of $\gf(2^5)$ with minimal polynomial $w^5+w^2+1=0$   and $T=\{w^2,~w^3,~w^6,~w^7\}$. Then $S_1=w^{30} $, $S_3=w^{27}$, $S_5=w^{20}$,
$\tr\left (\frac{(S_5+S_1 ^5)^3}{(S_3+S_1^3)^5}+1 \right )=0$, $\overline{N}=0$
and $N=5$. Then the parameters of the shortened code $\C_{T}$ are the same as the parameters of $\C_{T}$ in Example \ref{exa-41}.
\end{example}

\section{Summary and concluding remarks}\label{sec-summary}
In this paper, we mainly investigated the extended codes of the augmented codes from a class of cyclic codes with three zeros. The results shown that those extended codes and their dual codes held $3$-designs. Meanwhile, we also studied some shortened codes of the cyclic codes and completely determined their weight distributions. It was shown
that the parameters of the presented shortened codes are new, and some of those
codes are optimal in the sense that their parameters meet certain bound of linear
codes. We remark that this paper does not consider the shortened codes of $\C^{(e)}$ for $ m$ being even, since it is hard to determine the weight distributions of these shortened codes $(\C^{(e)})_T$ for $\#T \geq 2$ using our methods.
\section*{Acknowledgements}

The authors are very grateful to the reviewers and the Editor, for their comments and suggestions that improved the presentation and quality of this paper.
C. Xiang's research was supported by the National Natural Science Foundation of China under grant numbers 12171162 and 11971175, and the Basic Research Project of Science and Technology Plan of Guangzhou city of China under grant number 202102020888. C. Tang's research was supported by the National Natural Science Foundation of China under grant number 12231015, the Sichuan Provincial Youth Science and Technology Fund under grant number 2022JDJQ0041 and  the Innovation Team Funds of China West Normal University under grant number KCXTD2022-5.

%This research was supported by the National Natural Science Foundation of China under grant numbers 12171162, 11971175 and 11871058 and the Basic Research Project of Science and Technology Plan of Guangzhou city of China under grant number 202102020888.

\section*{References}

\end{document}